\documentclass{article}

\usepackage{microtype}
\usepackage{graphicx}
\usepackage{subcaption}
\usepackage{booktabs} 

\usepackage{hyperref}

\usepackage[preprint]{icml2026}

\usepackage{amsmath}
\usepackage{thm-restate}
\usepackage{amssymb}
\usepackage{mathtools}
\usepackage{amsthm}

\usepackage{yquant}
\useyquantlanguage{groups}
\usepackage{graphicx}
\usepackage{tikz,yquant,physics,tabularray}
\usetikzlibrary{calc}
\makeatletter
\DeclareRobustCommand\rvdots{%
\vbox{%
\baselineskip4\p@\lineskiplimit\z@%
\kern-\p@%
\hbox{.}\hbox{.}\hbox{.}%
}%
}
\usepackage[capitalize,noabbrev]{cleveref}

\theoremstyle{plain}
\newtheorem{theorem}{Theorem}

\newtheorem{lemma}[theorem]{Lemma}

\theoremstyle{definition}
\newtheorem{definition}[theorem]{Definition}

\theoremstyle{remark}

\usepackage[textsize=tiny]{todonotes}
\usepackage{braket}
\icmltitlerunning{SAQNN: Spectral Adaptive Quantum Neural Network as a Universal Approximator}

\begin{document}
\yquantdefinebox{dots}[inner sep=0pt]{$\dots$}
\twocolumn[
  \icmltitle{SAQNN: Spectral Adaptive Quantum Neural Network as a Universal Approximator}



  \icmlsetsymbol{equal}{*}

  \begin{icmlauthorlist}
    \icmlauthor{Jialiang Tang}{ict,ucas}
    \icmlauthor{Jialin Zhang}{ict,ucas}
    \icmlauthor{Xiaoming Sun}{ict,ucas}
  \end{icmlauthorlist}

  \icmlaffiliation{ucas}{School of Computer Science and Technology, University of Chinese Academy of Sciences, Beijing 100049, China}
  \icmlaffiliation{ict}{State Key Lab of Processors, Institute of Computing Technology, Chinese Academy of Sciences, Beijing 100190, China}

  \icmlcorrespondingauthor{Jialiang Tang}{tangjialiang20@mails.ucas.ac.cn}
  \icmlcorrespondingauthor{Xiaoming Sun}{sunxiaoming@ict.ac.cn}

  \icmlkeywords{Quantum Machine Learning, Quantum Neural Network, Universal Approximation Property}

  \vskip 0.3in
]



\printAffiliationsAndNotice{}  

\begin{abstract}
  Quantum machine learning (QML), as an interdisciplinary field bridging quantum computing and machine learning, has garnered significant attention in recent years. Currently, the field as a whole faces challenges due to incomplete theoretical foundations for the expressivity of quantum neural networks (QNNs). In this paper we propose a constructive QNN model and demonstrate that it possesses the universal approximation property (UAP), which means it can approximate any square-integrable function up to arbitrary accuracy. Furthermore, it supports switching function bases, thus adaptable to various scenarios in numerical approximation and machine learning. Our model has asymptotic advantages over the best classical feed-forward neural networks in terms of circuit size and achieves optimal parameter complexity when approximating Sobolev functions under $L_2$ norm.
\end{abstract}

\section{Introduction}
Quantum computing is a novel computational paradigm leveraging the high-dimensionality of Hilbert spaces to perform computations that are intractable for classical devices. While algorithms such as Shor's algorithm for integer factorization \yrcite{Shor1995} and Grover's algorithm for unstructured search \yrcite{Grover} have demonstrated quantum advantages, a broader question remains: can this computational power be harnessed for machine learning? This inquiry has given rise to Quantum Machine Learning (QML), a field dedicated to investigating the performance of quantum models in machine learning tasks \cite{Schuld2015,QML2017,Dallaire2018,Schuld2020,Cerezo2021,Huang2021,yan25}.

Over the past decades, machine learning has achieved remarkable success, fundamentally reshaping modern technology. Models constructed on classical architectures, including deep Feed-Forward Networks (FNNs) \cite{Rumelhart1986,Bebis1994,Eldan2015,qamar2023,aftabi2025} to recent advancements in diffusion models \cite{diffusion2015,ho2020,song2021,he2025} and Large Language Models (LLMs) \cite{vaswani2017,brown2020,ouyang2022,naveed2025}, are now capable of executing complex tasks like classification, clustering, and high-fidelity generation. The success is underpinned by a mature theoretical framework, most notably the Universal Approximation Theorem (UAT) \cite{cybenko1989,hornik1989}. This theorem guarantees that classical neural networks, given sufficient width or depth, can approximate continuous functions to arbitrary accuracy, providing a solid mathematical justification for their expressive power.

Quantum Neural Networks (QNNs) which are typically realized as parameterized quantum circuits (PQCs) have been deployed across a vast array of learning tasks in recent years \cite{schuld2014,killoran2019,jia2019,abbas2021}, from image classification \cite{farhi2018,schuld2019,mathur2021,shi2024} to solving physical systems \cite{li2017,kandala2017,yuan2019,lubasch2020,kyriienko2021}. Despite this empirical enthusiasm, the theoretical foundations of QNNs remain underdeveloped compared to the mature approximation theory of classical neural networks. A significant portion of QNNs research relies on heuristic network designs, where the relationship between the circuit architecture and the function space it can approximate is poorly understood. This leads to a critical bottleneck: without a rigorous understanding of the relations between model structure and its expressivity, it is difficult to determine whether a quantum model offers genuine advantages over classical models like deep neural networks. Also, how to construct powerful QNNs theoretically reliably is a question.

The central challenge is to establish a constructive approximation theory for QNNs. Unlike classical neural networks, QNNs must use quantum processes like measurement or data re-uploading schemes to realize the approximation while adhering to unitary constraints. Furthermore, for a QNN to be practically viable, it is insufficient to merely prove that it can approximate functions, the more important thing is to quantify the cost of circuit. This paper addresses these challenges by proposing a constructive QNN architecture inspired by linear combination of unitary (LCU), providing both universality guarantees and error bounds for approximating functions in Sobolev spaces under $L_2$ norm.

\subsection{Related research}

\textbf{Classical Models}. In the classical regime, the expressive power of neural networks is well-established. The seminal works by Cybenko \cite{cybenko1989} and Hornik \cite{hornik1989} proved that feed-forward networks with a single hidden layer possess Universal Approximation Property (UAP) for continuous functions. Beyond mere existence, recent research has focused on quantitative analysis, establishing rigorous error bounds and convergence rates for various function spaces. Notably, the approximation capabilities of FNNs have been characterized for continuous functions \cite{shen2019_1,shen2019_2,shen2022,yarotsky2018}, Sobolev functions \cite{yarotsky2017,yarotsky2020,lu2021,yang2025} and Besov functions \cite{siegel2023,suzuki2018,yang2025} in terms of circuit width, circuit depth and parameter complexity.

\textbf{Quantum-Classical Hybrid Models}. In the quantum domain, several recent studies have begun to address these foundational questions. The UAP for quantum-classical hybrid networks is demonstrated \cite{goto2020,hou2023}, while another work shows even one qubit is sufficient to approximate any bounded function \cite{adrian2021}. However, these architectures face limitations: the expressivity of hybrid model comes mostly from classical processes and the QNN only serves as the activation function.

\textbf{Quantum Models}. Efforts have been made to investigate power of quantum models without any classical trainable process. Schuld et al. \yrcite{schuld2021} proposed a multi-qubit model, implementing truncated Fourier series to obtain UAP. The qubit cost was optimized in later work \cite{perez2024}. But it relies on arbitrary global gates and parameterized observable, the latter of which requires complicated classical post-process. Thus the model is non-constructive, impractical and essentially hybrid. For univariate functions, a notable progress is that single-qubit model is sufficient for UAP \cite{yu2022}, but it does not fully utilize the entanglement power of quantum computing and encounters difficulties in approximating multivariate functions. The first constructive quantum model that has UAP for multivariate functions was proposed by Yu et al. \yrcite{yu2024}. They also provided non-asymptotic error bounds for Lipschitz functions and H$\ddot{o}$lder functions under $L_\infty$ norm. However, the $L_\infty$ accuracy is often too stringent in practical learning tasks dominated by average-case performance, and the power of QNN approximating more function spaces remains unknown. This is the theoretical gap our work aims to fill.

\subsection{Our results}

\begin{figure*}[htbp]
    \centering
    \hspace*{-0.2cm} 
    \begin{tikzpicture}
    \begin{yquant*}[every nobit output/.style={},
      register/separation=3mm
    ]
    qubit {$\ket{0}$} x;
    
    qubit {} x[+1]; discard x[1];
    qubit {$\ket{0}$} x[+1];
    qubit {$\ket{0}$} x[+1];
    qubit {} x[+1]; discard x[4];
    qubit {$\ket{0}$} x[+1];
    
    hspace {2mm} x;
    [fill={blue!40}]
    [name=p1] box {$P(\boldsymbol{\theta})$} (x[0-2]);
    hspace {2mm} x;
    barrier (x);
    hspace {2mm} x;
    [fill={green!40}]
    [name=v1] box {$V_1(\boldsymbol{x})$} (x);
    hspace {2mm} x;
    [fill={red!40}]
    [name=s1] box {$S(\phi_1)$} (x[0-2]);
    hspace {2mm} x;
    [fill={green!40}]
    [name=v2] box {$V_2(\boldsymbol{x})$} (x);
    hspace {2mm} x;
    [fill={red!40}]
    [name=s2] box {$S(\phi_2)$} (x[0-2]);
    hspace {2mm} x;
         
    [draw=none]               
    box {$\cdots$} (x[0-5]); 
    hspace {2mm} x;          
    
    [fill={green!40}]
    [name=v3] box {$V_n(\boldsymbol{x})$} (x);
    hspace {2mm} x;
    [fill={red!40}]
    [name=s3] box {$S(\phi_n)$} (x[0-2]);
    hspace {2mm} x;

    [name=anc] not x[3] | x[0,2];
    barrier (x);
    hspace {2mm} x;
    [fill={blue!40}]
    [name=p2] box {$P(\boldsymbol{\theta})^\dagger$} (x[0-2]);
    hspace {2mm} x;

    [fill=yellow!50, name=m1]
    measure x[0]; discard x[0];
    [fill=yellow!50, name=m2]
    measure x[2]; discard x[2];
    [fill=yellow!50, name=m3]
    measure x[3]; discard x[3];
    [fill=yellow!50, name=m4]
    measure x[5]; discard x[5];
    
  \end{yquant*}
  
  \node[fill=white, inner sep=1pt] at ($(p1.west)+(-0.65,0.12)$) {$\vdots$};
  \node[fill=white, inner sep=1pt] at ($(p1.west)+(-0.65,-2.1)$) {$\vdots$};
  \node[fill=white, inner sep=1pt, text height = 2.5ex] at ($(anc |- p1)+(0,0.05)$) {$\vdots$};
  
  \path(m1.south)--(m2.north) node[midway] {$\rvdots$};
  \path(m3.south)--(m4.north) node[midway] {$\rvdots$};

  \newcommand{\spl}[1]{
    \begin{minipage}[c]{0.5cm}\centering
      \begin{tikzpicture}[every path/.style={line width=0.2mm}]
        \draw [violet!90] (-0.1,0)--(0.1,0);
        \draw [violet!90] (0,0)--(0,#1);
        \draw [violet!90] (-0.1,#1)--(0.1,#1);
      \end{tikzpicture}
    \end{minipage}
  }
  \newcommand{\fre}[2]{
    \begin{minipage}[c]{1.5cm}\centering
      \begin{tikzpicture}
        \draw[green!40, thick, domain=0:2, samples=100] 
    plot (\x/1.57, {#1*sin(10*#2*\x r)}) ;
    \end{tikzpicture}
    \end{minipage}
  }
  
  \node[fill=gray!40,anchor=south,inner sep=4pt] (c1) at ($(v1.north)+(-1.2,2)$) {
    \begin{tblr}{
        colspec={Q[c,m] Q[c,m] Q[c,m]},
        colsep=0pt, 
      }
      $c_1$&\spl{1}&\fre{0.5}{1}  
    \end{tblr}
  };
  \node[fill=gray!40,anchor=south,inner sep=4pt] (c2) at ($(v2.north)+(0,2)$) {
    \begin{tblr}{
        colspec={Q[c,m] Q[c,m] Q[c,m]},
        colsep=0pt, 
      }
      $c_2$&\spl{0.7}&\fre{0.35}{1.8}  
    \end{tblr}
  };
  \node at ($(v2.north)+(2.5,2.2)$) {$\ldots$};  
  \node[fill=gray!40,anchor=south,inner sep=4pt] (c3) at ($(v3.north)+(1,2)$) {
    \begin{tblr}{
        colspec={Q[c,m] Q[c,m] Q[c,m]},
        colsep=0pt, 
      }
      $c_n$&\spl{1.2}&\fre{0.6}{0.65}  
    \end{tblr}
  };

  \draw [green!60] (c1.south)+(0.5,0) to [out=-100, in=135] (v1.north);
  \draw [green!60] (c2.south)+(0.5,0) to [out=-60, in=135] (v2.north);
  \draw [green!60] (c3.south)+(0.5,0) to [out=210, in=60] (v3.north);
  
  \draw [blue!60] (c1.south)+(-0.6,0) to [out=-100, in=135] (p1.north);
  \draw [blue!60] (c2.south)+(-0.6,0) to [out=-100, in=90] (p1.north);
  \draw [blue!60] (c3.south)+(-0.6,0) to [out=-100, in=45] (p1.north);

  \draw [red!60] (c1.south)+(-0.6,0) to [out=-80, in=135] (s1.north);
  \draw [red!60] (c2.south)+(-0.6,0) to [out=-60, in=90] (s2.north);
  \draw [red!60] (c3.south)+(-0.6,0) to [out=-40, in=120] (s3.north);

  \node[] (note1) at ($(anc.north)+(1.4,5)$) {Padding layer};
  \draw[-latex] (note1.south)-- ($(anc.north)+(0,2.2)$);
  
  \draw [
    thick,
    decoration={
        brace,
        mirror,
        raise=0.1cm
    },
    decorate
  ] ($(v1.south west)+(0,-0.1)$) -- ($(v1.south west)+(2.5,-0.1)$);
  \node at ($(v1.south west)+(1.25,-0.5)$) {Layer 1};
  \draw [
    thick,
    decoration={
        brace,
        mirror,
        raise=0.1cm
    },
    decorate
  ] ($(v2.south west)+(0,-0.1)$) -- ($(v2.south west)+(2.5,-0.1)$);
  \node at ($(v2.south west)+(1.25,-0.5)$) {Layer 2};
  \draw [
    thick,
    decoration={
        brace,
        mirror,
        raise=0.1cm
    },
    decorate
  ] ($(v3.south west)+(0,-0.1)$) -- ($(v3.south west)+(2.5,-0.1)$);
  \node at ($(v3.south west)+(1.25,-0.5)$) {Layer $n$};
  
  \draw [
    thick,
    decoration={
        brace,
        mirror,
        raise=0.1cm
    },
    decorate,
    rotate = 270
  ] ($(p1.north west)+(-0.1,-0.8)$) -- ($(p1.south west)+(0.1,-0.8)$);
  \node at ($(p1.west)+(-1.4,0)$) {$m$};
  \draw [
    thick,
    decoration={
        brace,
        mirror,
        raise=0.1cm
    },
    decorate,
    rotate = 270
  ] ($(p1.north west)+(2.1,-0.8)$) -- ($(p1.south west)+(2.3,-0.8)$);
  \node at ($(p1.west)+(-1.4,-2.2)$) {$d$};

\end{tikzpicture}
    \caption{The construction of SAQNN. The state preparation block, spectrum selection block and phase injection are marked blue, green and red respectively. When approximating functions of $d$ variables using $n$ terms of truncated Fourier series, the whole circuit consists of $m+d$ qubits, with state preparation block $P(\boldsymbol{\theta})$, $n$ layers of spectrum selection block accompanied by phase injection, a padding layer and an inversion of $P(\boldsymbol{\theta})$ applied before final measurement. Here $m=\lceil\log n\rceil$ and $c_1,c_2,\cdots,c_n$ denotes coefficients of Fourier series with $n$ terms.}
    \label{fig1}
\end{figure*}

In this paper, we propose a constructive QNN model named \textbf{Spectral Adaptive Quantum Neural Network (SAQNN)}, of which the architecture is shown in Figure \ref{fig1}. We rigorously prove its UAP for arbitrary square-integrable functions and establish error bounds for $L_2$-approximation of Sobolev functions. Our analysis characterizes the asymptotic resource costs in terms of circuit width (number of qubits), circuit depth, and parameter complexity—metrics analogous to the network width and depth of classical neural networks. Finally, numerical experiments are conducted to verify the feasibility of our model. We now state the main theorems.

Denote the quantum circuit of SAQNN as $U_{\boldsymbol{\theta,\phi}}(\boldsymbol{x})$, in which $\boldsymbol{\theta}$, $\boldsymbol{\phi}$ are vectors of trainable parameters in state preparation and phase injection, and $\boldsymbol{x}=(x_1,\cdots,x_{d})$ is $d$-dimensional inputs encoded in rotation gates. 

\begin{restatable}{theorem}{thmone}\label{thm1}
    For any multivariate square-integrable function $f: [-\pi,\pi]^d\rightarrow [0,1]$ and any accuracy $\epsilon$, there exists $\boldsymbol{\theta}$, $\boldsymbol{\phi}$ and rescale coefficient $a$ s.t. 
    \begin{equation*}
        \Vert a\bra{0}U_{\boldsymbol{\theta,\phi}}(\boldsymbol{x})^\dagger O U_{\boldsymbol{\theta,\phi}}(\boldsymbol{x})\ket{0}^{1/2} 
        -
        f(\boldsymbol{x})
        \Vert_{2}
        \leq
        \epsilon,
    \end{equation*}
    where the global observable $O=\ket{0}\bra{0}$.
\end{restatable}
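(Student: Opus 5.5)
The plan is to show that the circuit computes the modulus of a truncated Fourier series, up to a known scale, and then use $L_2$ density of trigonometric polynomials to finish.

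\textbf{Reduction to a Fourier partial sum.} Since $f\in L_2([-\pi,\pi]^d)$, the truncated Fourier series $f_N(\boldsymbol{x})=\sum_{\boldsymbol{k}\in K_N} c_{\boldsymbol{k}} e^{i\boldsymbol{k}\cdot\boldsymbol{x}}$ over a finite frequency set $K_N$ converges to $f$ in $L_2$ by Parseval/Plancherel. So for $N$ large we have $\|f-f_N\|_2\le\epsilon/2$. Because $f\ge 0$, the reverse triangle inequality gives
\begin{equation*}
\big\||f_N|-f\big\|_2=\big\||f_N|-|f|\big\|_2\le \|f_N-f\|_2 .
\end{equation*}
It therefore suffices to realize $|f_N|$, up to a rescaling, as $\bra{0}U^\dagger O U\ket{0}^{1/2}=|\bra{0}U\ket{0}|$.

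\textbf{Circuit as an LCU.} Let $n=|K_N|$ and $m=\lceil\log n\rceil$. I would choose the following blocks.
\begin{itemize}
\item $P(\boldsymbol{\theta})$ prepares $\sum_j\sqrt{|c_j|/\lambda}\ket{j}$ on the $m$ index qubits, with $\lambda=\sum_j|c_j|$. Standard amplitude-encoding state preparation with rotation angles $\boldsymbol{\theta}$ achieves this.
\item Each layer $V_j(\boldsymbol{x})$, controlled on index $j$, applies data-encoding rotations on the $d$ data qubits. The aim is that the relevant amplitude on the data register equals $e^{i\boldsymbol{k}_j\cdot\boldsymbol{x}}$, for instance using $R_Z(k_{j,\ell}x_\ell)$-type rotations acting on $\ket{0}$ of qubit $\ell$.
\item $S(\phi_j)$ injects the phase $\arg c_j$ on branch $j$.
\item The padding layer, a Toffoli onto an ancilla, handles the unused indices when $n<2^m$, so that those branches contribute nothing to the $\ket{0}$ amplitude.
\end{itemize}
Applying $P^\dagger$ and projecting onto $\ket{0}$ then gives the standard LCU identity
\begin{equation*}
\bra{0}U\ket{0}=\sum_j \frac{|c_j|}{\lambda}\, e^{i\arg c_j}\, e^{i\boldsymbol{k}_j\cdot\boldsymbol{x}}=\frac{f_N(\boldsymbol{x})}{\lambda}.
\end{equation*}
Taking $a=\lambda$ yields $a|\bra{0}U\ket{0}|=|f_N(\boldsymbol{x})|$.

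\textbf{Conclusion and main obstacle.} Combining the two steps gives error at most $\epsilon/2<\epsilon$. The main obstacle is the encoding step. The rotation gates must produce exactly the character $e^{i\boldsymbol{k}\cdot\boldsymbol{x}}$, including negative and large integer frequencies, rather than products of $\cos(\cdot/2)$ factors. I would handle this by applying $R_Z$ gates to the data qubits in $\ket{0}$, so that the $\ket{0}$ amplitude picks up the pure phase $e^{-ik x/2}$. Frequencies are doubled to compensate for the factor $1/2$, and $\boldsymbol{k}$ is shifted so the global phase is absorbed into $\phi_j$. A shift common to all terms is harmless, since only the modulus is observed.
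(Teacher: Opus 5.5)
Your proposal is correct and follows essentially the same route as the paper. Both truncate the Fourier series, realize $f_N/\lambda$ as $\bra{0}U\ket{0}$ through the LCU structure with $|a_j|^2=|c_j|/\lambda$, $\phi_j=\arg c_j$ and $U_j=\bigotimes_\ell R_z(-2k_{j,\ell}x_\ell)$, take $a=\lambda$, and finish with $\Vert |f_N|-f\Vert_2\le\Vert f_N-f\Vert_2$ using $f\ge 0$. Your frequency-shift remark is unnecessary, since $\bra{0}R_z(-2kx)\ket{0}=e^{ikx}$ holds exactly for integers $k$ of either sign, and with zero amplitude on unused indices the padding layer plays no role when $a=\lambda$ (the paper uses it to absorb leftover amplitude when $a>\lambda$).
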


In order to quantify the approximation cost of our QNN, we target Sobolev function spaces, a fundamental function class the field of classical neural network approximation concerns a lot \cite{yarotsky2017,lu2021}. Sobolev spaces are inclusive, naturally embedding broad function classes such as continuous functions \cite{adams2003}. Furthermore, since solutions to many physical systems inherently reside in Sobolev spaces \cite{evans2022}, establishing theoretical guarantees here suggests strong potential for quantum learning in physical tasks. Crucially, the Fourier series, which constitutes the optimal approximation basis for Sobolev functions \cite{pinkus2012}, can be naturally implemented on quantum circuits. 

\begin{restatable}{theorem}{thmtwo}\label{thm2}
    For any multivariate Sobolev function $f\in H_u^s([-\pi,\pi]^d)$ with value range $[0,1]$ and any $\epsilon >0$, there exists $\boldsymbol{\theta}$, $\boldsymbol{\phi}$ and rescale coefficient $a$ s.t. 
    \begin{equation*}
        \Vert a\bra{0}U_{\boldsymbol{\theta,\phi}}(\boldsymbol{x})^\dagger O U_{\boldsymbol{\theta,\phi}}(\boldsymbol{x})\ket{0}^{1/2} 
        -
        f(\boldsymbol{x})
        \Vert_{2}
        \leq
        \epsilon,
    \end{equation*}
    where the global observable $O=\ket{0}\bra{0}$. 
    The circuit width (number of qubits) of $U_{\boldsymbol{\theta,\phi}}(\boldsymbol{x})$ is 
    $O(\log n)$, 
    the circuit depth is
    $O(n\log n)$ 
    and the parameter complexity is 
    $O(n)$. Here $n=O((d+1/\epsilon)^{16(1/\epsilon)^{2/s}})$.
\end{restatable}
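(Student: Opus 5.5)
Taking $g=\sqrt f$ is the natural first move, because the circuit outputs the square root of a probability. The amplitude $\bra{0}U_{\boldsymbol\theta,\boldsymbol\phi}(\boldsymbol x)\ket{0}$ should be a normalized linear combination $\sum_{k=1}^n c_k e^{i\boldsymbol\omega_k\cdot\boldsymbol x}/\|\boldsymbol c\|_1$, as in LCU. Its modulus is $|g_n|/\|\boldsymbol c\|_1$, where $g_n$ is a truncated Fourier series of $g$. Setting $a=\|\boldsymbol c\|_1$ then gives $a\bra{0}U^\dagger O U\ket{0}^{1/2}=|g_n(\boldsymbol x)|$. Since $g\ge 0$ is real, $\bigl||g_n|-g\bigr|\le |g_n-g|$, so it is enough to bound $\|g_n-g\|_2$.

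\textbf{Step 1 (regularity of $g$).} I need $g=\sqrt f\in H^{s'}_u$ with a controlled norm. This fails in general where $f$ vanishes. I would avoid the issue by approximating $f_\delta=(1-\delta)f+\delta$ instead. Its square root is smooth as a composition, with Sobolev norm bounded through Faà di Bruno in terms of powers of $\delta^{-1}$. Also $\|\sqrt{f_\delta}-\sqrt f\|_2\le\sqrt\delta\,(2\pi)^{d/2}$, so I take $\delta\sim\epsilon^2$. This loss in the constants is consistent with the large, tower-like $n$ in the theorem.

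\textbf{Step 2 (Fourier truncation).} Use the standard Sobolev tail bound $\|g-\Pi_N g\|_2\le N^{-s}|g|_{H^s}$ over the frequency box or ball of radius $N$. Choose $N\sim(\|g\|_{H^s}/\epsilon)^{1/s}$. The number of retained frequencies is $n=\binom{N+d}{d}\lesssim (d+N)^{N}$ for the hyperbolic or total-degree index set. Substituting $N\sim\epsilon^{-2/s}$ after the $\delta$-blowup, together with crude constants, should give the stated $n=O((d+1/\epsilon)^{16\epsilon^{-2/s}})$. I would not try to optimize the constant 16.

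\textbf{Step 3 (circuit realization).} $P(\boldsymbol\theta)$ prepares $\sum_k\sqrt{|c_k|/\|\boldsymbol c\|_1}\ket{k}$ on $m=\lceil\log n\rceil$ qubits. This uses $O(n)$ parameters and depth $O(n)$ via the standard tree of uniformly controlled rotations. For each $k$, layer $k$ applies $V_k(\boldsymbol x)$, a product of $R_Z$ encodings on the $d$ data qubits controlled on the index register equal to $k$. It contributes the phase $e^{i\boldsymbol\omega_k\cdot\boldsymbol x}$, with integer frequencies realized by repeated encodings. $S(\phi_k)$ then adds the phase $\arg c_k$ on $\ket{k}$. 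Each multi-controlled operation on $m$ controls costs $O(\log n)$ depth, which gives total depth $O(n\log n)$ and width $O(\log n)$ once $d$ and the encoding repetitions are absorbed. The padding layer and $P^\dagger$ then make the $\ket{0}$ amplitude equal to $\sum_k |c_k|e^{i\arg c_k}e^{i\boldsymbol\omega_k\cdot\boldsymbol x}/\|\boldsymbol c\|_1$, up to a global phase on the data register. The modulus is unaffected by that phase.

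The main obstacle is Step 1, together with checking that the frequency magnitudes $|\boldsymbol\omega_k|\le N$ can be encoded within the stated depth. If the repetition count is too large, I would use binary-weighted encodings $R_Z(2^j x_i)$, which cost $O(d\log N)$ per layer. That cost is still polylogarithmic and is absorbed in the $O(n\log n)$ bound.
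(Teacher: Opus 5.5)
Your first move is the error: choosing $g=\sqrt f$ compensates for the square root twice. As your own Step 3 shows, the sandwich $P(\boldsymbol\theta)^\dagger\,V\,P(\boldsymbol\theta)$ already makes the $\ket{0}$-amplitude linear in the Fourier coefficients, $\sum_k c_k e^{i\boldsymbol\omega_k\cdot\boldsymbol x}/\|\boldsymbol c\|_1$. The state preparation contributes the factor $\sqrt{|c_k|/\|\boldsymbol c\|_1}$, and its inverse contributes the same factor again. The square root in the statement only turns the probability back into the modulus of that amplitude, so the output is $|g_n|$, as you say. With $g_n$ the truncated series of $\sqrt f$, you are therefore bounding $\|\,|g_n|-\sqrt f\,\|_2$. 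The theorem needs $\|\,|g_n|-f\,\|_2$, which is at least $\|\sqrt f-f\|_2-\epsilon$ and is not small for a general $f$ with values in $[0,1]$. The fix, which is exactly the paper's proof, is to take $g=f$. Let $f_k=\sum_r c_r e^{i\boldsymbol j^{(r)}\cdot\boldsymbol x}$ be a truncated Fourier series of $f$ itself, and set $a=\sum_r|c_r|$, $|a_r|^2=|c_r|/a$, $\phi_r=\arg c_r$. The output is then $|f_k|$, and you conclude with $\|\,|f_k|-f\,\|_2\le\|f_k-f\|_2$ using $f\ge 0$. Your ``main obstacle'' disappears, because $f\in H^s_u$ is given and no regularity of $\sqrt f$ is needed. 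With this change your Step 3 is the paper's construction.

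The accounting in Steps 1--2 would not have produced the stated $n$ either. The chain-rule bound on $\|\sqrt{f_\delta}\|_{H^s}$ grows like $\delta^{1/2-s}$, so $\delta\sim\epsilon^2$ forces $N\sim\epsilon^{-2}$; ``$N\sim\epsilon^{-2/s}$ after the $\delta$-blowup'' is fitted to the answer. Step 2 also mixes index sets. The tail bound $N^{-s}$ needs the $\ell_2$ (or $\ell_\infty$) ball of frequencies, but $\binom{N+d}{d}$ counts a total-degree set, whose tail is only of order $(N/\sqrt d)^{-s}$; hyperbolic crosses need mixed smoothness. The exponent $\epsilon^{-2/s}$ actually comes from the $\ell_2$ ball. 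Its lattice points have at most $N^2$ nonzero coordinates, so it contains at most $\bigl((d+1)(2N+1)\bigr)^{N^2}$ points, and $N\asymp\epsilon^{-1/s}$ gives $n=(d+1/\epsilon)^{O(\epsilon^{-2/s})}$. The paper gets the same from the preasymptotic approximation-number bound $a_n\le 4^s\bigl(\log(1+d/\log n)/\log n\bigr)^{s/2}$, together with a case analysis for $n\ge 2^d$. Finally, $\log n=O(\epsilon^{-2/s}\log(d+1/\epsilon))$, which is much smaller than $d$ for fixed $\epsilon$ and large $d$. So $d$ data qubits and an $O(d\log N)$ encoding per layer are not absorbed into width $O(\log n)$ and depth $O(n\log n)$. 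The paper avoids this by compressing the data register to one qubit, $U_r=R_z(-2\,\boldsymbol j^{(r)}\cdot\boldsymbol x)$, with the angle computed by non-trainable classical pre-processing. Each layer is then a single controlled-$R_z$ of depth $O(m)=O(\log n)$, and integer frequencies need no repeated encodings.
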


A more compressive encoding strategy is adopted in analysis of Theorem \ref{thm2}, optimizing number of qubits from $O(\log n+d)$ to $O(\log n)$, see Appendix \ref{apd4}. Note that we are the first to analyze the error bounds of QNNs approximating Sobolev functions under $L_2$ norm in terms of circuit width, circuit depth and parameter complexity.

Besides UAP, our model has more advantages:
\begin{itemize}
    \item It is a constructive model, which means the construction (or synthesis) of circuit is explicit with basic gates, distinguished from the model with non-constructive global blocks proposed by Schuld \yrcite{schuld2021}.
    \item It supports basis shift between Fourier series and Chebyshev series, thus adaptive to various scenarios in numerical approximation and machine learning. 
    \item On $L_2$-approximating of Sobolev functions, it has quantum advantages over state-of-the-art (SOTA) of classical neural networks in $d$-demanding case. 
    \item On $L_2$-approximation of Sobolev functions, it achieves the optimal parameter complexity with regards to the asymptotic order of accuracy $\epsilon$.
\end{itemize}

 To clarify our contribution, we not only propose a constructive QNN architecture possessing the universal approximation property but also establish rigorous error bounds for approximating Sobolev functions under $L_2$ norm. Given the current underdeveloped state of quantum machine learning theory, this work advances the theoretical understanding of QNNs' expressivity. Since Fourier and Chebyshev bases are fundamental in approximation theory. Our constructive QNN model provides explicit implementation of these bases on quantum circuits, offering a versatile and interpretable framework for machine learning and function approximation. By enabling efficient approximation with these canonical bases, our model offers high-level guidance for design of QNNs in practical scenarios, paving the way for more reliable and powerful quantum models.

\subsection{Organization}
The rest of the paper is organized as below. Section \ref{sec2}
is about the preliminaries. In section \ref{sec3}, we introduce our model construction in detail, then provide analysis of error bounds approximating Sobolev functions. The main results are placed here. In section \ref{sec4} we conduct numerical experiments to verify the feasibility of our model, while proposing a practical scaling strategy. We make a summary of this paper in section \ref{sec5}, including achievements and drawbacks, and then raise some open problems for future work.

\section{Preliminaries}\label{sec2}
Given a matrix $U$, we denote its $(i,j)$-th entry as $U_{ij}$, and the conjugate transpose of $U$ as $U^\dagger$. We use $\otimes$ to denote the Kronecker tensor product over two matrices.

\subsection{Basic knowledge of Quantum computing}
\textbf{Quantum state.} The basic computing unit in quantum computation is qubit. It is a unit vector in Hilbert space $\mathcal{H}\cong \mathbb{C}^2$ and its state can be denoted by $\ket{\psi}=\alpha\ket{0}+\beta\ket{1}$, which satisfies the normalization condition $|\alpha|^2+|\beta|^2 = 1$. A quantum state of $n$ qubits is a vector in $n$-product Hilbert spaces $\mathcal{H}^{n}\cong \mathbb{C}^{2^n}$. Unlike that in classical computing, the state can be in exponential basic states simultaneously and the coefficients can be understood as probability amplitude. We denote $\bra{\psi}$ as the conjugate transpose of $\ket{\psi}$ and inner product of two states $\ket{\varphi}$ and $\ket{\psi}$ is $\bra{\varphi}\psi\rangle$. These are well-known \textit{Dirac notation}.

\textbf{Quantum gate.} Quantum gate is the computation (or evolution) operated on quantum qubits. It's a unitary matrix $U\in U(2^n)$. The most used gates include 2-qubit gate $CNOT=I_2\oplus X$, single-qubit rotation gates $R_x(\theta) = e^{-\theta X/2}$, $R_y(\theta) = e^{-\theta Y/2}$, $R_z(\theta) = e^{-\theta Z/2}$. The $X,Y,Z$ are Pauli operators defined as
\begin{equation*}
    X 
    \equiv
    \begin{pmatrix}
        0 & 1 \\
        1 & 0
    \end{pmatrix},
    Y
    \equiv
    \begin{pmatrix}
        0 & -i \\
        i & 0
    \end{pmatrix},
    Z
    \equiv
    \begin{pmatrix}
        1 & 0 \\
        0 & -1
    \end{pmatrix}.
\end{equation*}

\textbf{Measurement.} A measurement is a quantum process extracting classical information from qubits. The result of measurement on $\ket{\psi}$ with observable $O$ is $\bra{\psi}O\ket{\psi}$, where $O$ satisfies Hermitian condition $O=O^\dagger$. Since quantum measurement is probabilistic, the result is an expectation value.


\subsection{Quantum multiplexor}
A quantum multiplexor \cite{Shende2006} with $s$ controlled qubits (positioned at the highest levels) and $d$ target qubits is a block-diagonal unitary matrix comprising $2^s$ unitary matrices $U_i\in U(2^d)$, $1\leq i \leq2^s$:
    \begin{equation*}
    U = 
    \begin{pmatrix}
        U_1&& \\
        &\ddots&\\
        &&U_{2^s}
    \end{pmatrix}.
\end{equation*}
Here is a typical example: the multiplexor-$R_y$, which is a multiplexor with 1 data qubit acted by $R_y$ gate with different angles depending on the state of controlled qubits. In the circuit diagram, we mark each controlled qubit of $U$ in quantum circuits with a ``$\boxed{}$'' symbol.
\begin{center}
    \scalebox{1.5}{
    \begin{tikzpicture}
    \begin{yquantgroup}
    \registers{
    qubit {} q[2];
    
    }
    \circuit{
    slash q[0];
    hspace {2mm} q[0];
    [this control/.append style={shape=yquant-rectangle,draw, /yquant/default fill}]
    box {$R_y$} q[1]| q[0];
    hspace {2mm} q[0];
    }

    \end{yquantgroup}
    \end{tikzpicture}
    }
\end{center}

\begin{lemma}[Decomposition of multiplexor rotation gates \cite{bullock2003,Mottonen2004,Shende2006}]For multiplexor-$R_k$ and $k=y,z$, it holds that

    \begin{center}
    \scalebox{1.5}{
    \begin{tikzpicture}
    \begin{yquantgroup}
    \registers{
    qubit {} q[3];
    
    }
    \circuit{
    slash q[1];
    hspace {2mm} q[1];
    [this control/.append style={shape=yquant-rectangle,draw, /yquant/default fill}]
    box {$R_k$} q[2]| q[0],q[1];
    hspace {2mm} q[1];
    }
    \equals
    \circuit{
    slash q[1];
    hspace {2mm} q[1];
    [this control/.append style={shape=yquant-rectangle,draw, /yquant/default fill}]
    box {$R_k$} q[2]|q[1];
    cnot q[2] | q[0];
    [this control/.append style={shape=yquant-rectangle,draw, /yquant/default fill}]
    box {$R_k$} q[2]|q[1];
    cnot q[2] | q[0];
    hspace {2mm} q[1];
    }
    \end{yquantgroup}
    \end{tikzpicture}
    }
    \end{center}
    and any $n$-qubit multiplexor-$R_k$ can be synthesized by at most $2^{n-1}$ CNOT gates. 
    \label{lemma1}
\end{lemma}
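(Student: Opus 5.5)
We prove both parts of Lemma~\ref{lemma1}, the circuit identity and the CNOT count. For the identity we compare the two sides block by block on the control register. Write the controls as $q_0$ (the top qubit, singled out) and $q_1$ (the remaining $s-1$ qubits), and the target as $q_2$. The left side acts on the target as $R_k(\alpha_{b,j})$ when $q_0=b\in\{0,1\}$ and $q_1=j$. On the right, let the two smaller multiplexors apply $R_k(\beta_j)$ and $R_k(\gamma_j)$ when $q_1=j$.

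If $q_0=0$, both CNOTs act trivially, so the target sees $R_k(\gamma_j)R_k(\beta_j)=R_k(\beta_j+\gamma_j)$. If $q_0=1$, the target sees $X R_k(\gamma_j) X R_k(\beta_j)$. For $k=y,z$ the Pauli $X$ anticommutes with the generator, so $XR_k(\theta)X=R_k(-\theta)$, and the product is $R_k(\beta_j-\gamma_j)$. Setting $\beta_j=(\alpha_{0,j}+\alpha_{1,j})/2$ and $\gamma_j=(\alpha_{0,j}-\alpha_{1,j})/2$ reproduces every block of the left side. This linear system is always solvable, so the identity holds for arbitrary angles.

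For the CNOT count, let $C(n)$ be the number of CNOTs needed for an $n$-qubit multiplexor-$R_k$, i.e.\ one with $n-1$ controls. Applying the identity recursively to each of the two smaller multiplexors gives $C(n)\le 2C(n-1)+2$ with $C(1)=0$. Solved naively, this yields $2^n-2$, which is too weak, so the recursion must be refined.

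The refinement comes from the freedom to reverse a decomposition. We may mirror the decomposition, writing CNOT, multiplexor, CNOT, multiplexor instead. This is valid because conjugating the whole identity by the CNOT, or equivalently reordering the commuting diagonal factors, gives the same form; for $R_z$, multiplexors are diagonal and commute outright. We decompose the first sub-multiplexor with its CNOT at the end and the second in mirrored form with its CNOT at the beginning. The two CNOTs that meet in the middle then either cancel, or combine with the adjacent explicit CNOT of the outer level to leave a single gate.

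Carrying this out, each recursion level contributes exactly one CNOT per sub-block. This gives $C(n)=2C(n-1)$ with $C(2)=2$, which has the solution $C(n)=2^{n-1}$, matching the explicit construction of \cite{Mottonen2004}. Equivalently, the construction is uniformly controlled rotations with Gray-code ordering of the control CNOTs: $2^{n-1}$ rotations interleaved with $2^{n-1}$ CNOTs, the angles obtained by a Walsh--Hadamard-type transform that generalizes the $\beta,\gamma$ solution above.

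The main obstacle is this cancellation and ordering bookkeeping. Showing the mirrored form is legitimate and that adjacent CNOTs merge to exactly one per level needs care; I would handle it via the Gray-code formulation, where each step flips exactly one control bit, which makes the count $2^{n-1}$ immediate.
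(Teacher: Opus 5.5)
The paper gives no proof of this lemma (it only cites Bullock--Markov, M\"ott\"onen et al., and Shende et al.). Your proposal is correct and reproduces the standard argument from those sources: the block-wise check with $XR_k(\theta)X=R_k(-\theta)$ and half-sum/half-difference angles, followed by the mirror-and-cancel recursion $C(n)=2C(n-1)$, $C(2)=2$, which unrolls to the Gray-code construction. Two small points should be tightened. First, conjugating the identity by the CNOT alone negates the $\alpha_{1,j}$, so it only gives the mirrored form after re-parametrizing; justifying the mirrored form by the same block-wise check is cleaner. Second, state the invariant that each optimized sub-circuit still ends (standard form) or begins (mirrored form) with its own outermost CNOT, so that at every level the three middle CNOTs, which share a target and hence commute, collapse to one.
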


One should distinguish between multiplexor gates with controlled-$U$ gates. The latter is to apply $U$ under a specific state of control qubits, but do nothing otherwise.

\subsection{Data re-uploading Quantum neural networks}
In quantum neural networks, how the inputs come and be encoded will fundamentally affect it's expressivity. The commonly used encoding strategy is \textit{data re-uploading} \cite{perez2020,schuld2021,wach2023}, which encodes the inputs into angles of rotation gates. An example of 2-qubit QNN model is shown below.

\begin{center}
    \scalebox{1.2}{
    \begin{tikzpicture}
    \begin{yquantgroup}
    \registers{
    qubit {} q[2];
    
    }
    \circuit{
    hspace {2mm} q[0-1];
    box {$R_y(x_1)$} q[0];
    box {$R_z(-x_2)$} q[1];
    cnot q[1] | q[0];
    box {$R_z(2x_1)$} q[0];
    box {$R_z(x_2)$} q[1];
    cnot q[1] | q[0];
    box {$R_z(\theta_1)$} q[0];
    box {$R_x(\theta_2)$} q[1];
    hspace {2mm} q[0-1];
    }
    \end{yquantgroup}
    \end{tikzpicture}
    }
    \end{center}
In the circuit we can upload the inputs $x_1,x_2$ repeatedly, and train the parameters $\theta_1, \theta_2$. Note that simple classical pre-process is allowed since it's not parameterized, distinguished from hybrid networks. 

\section{Model construction and analysis}\label{sec3}
A useful mathematical tool in approximation theory is \textit{Fourier series}. In many natural settings, a Fourier series of a univariate function $f(x)$ is the linear combination of periodic function bases with integer frequencies:
$$
f(x)= \sum_{j=-\infty}^{\infty} c_j e^{ijx}.
$$

More generally, the Fourier series for multivariate functions $f(\boldsymbol{x})$ works in a similar way,
$$
f(\boldsymbol{x}) = \sum_{j_1,\dotsc,j_d=-\infty}^{\infty} c_{\boldsymbol{j}} e^{i\boldsymbol{j x}},
$$
where the frequency $\boldsymbol{j}=(j_1,j_2,\cdots,j_d)$ and the inputs $\boldsymbol{x}=(x_1,x_2,\cdots,x_d)$ are $d$-dimension vectors.

A \textit{truncated Fourier series} keep finite items in Fourier series, usually the function bases with degree $\Vert \boldsymbol{j}\Vert_\infty \leq k$, and noted as
$$
f_k(\boldsymbol{x}) = \sum_{j_1,\dotsc,j_d=-k}^{k} c_{\boldsymbol{j}} e^{i\boldsymbol{j x}}.
$$
It's an important result in approximation theory that any square-integrable functions can be approximated by a truncated Fourier series to arbitrary accuracy.
\begin{lemma}[Fourier approximation \cite{stein1971,rivlin1981,weisz2012}]
\label{lemma:approx}
    For any $f\in L_2([-\pi,\pi]^d)$ and $\epsilon>0$, there exists a $k$-truncated Fourier series $f_k$ s.t.
    $$
    \Vert f(\boldsymbol{x})-f_k(\boldsymbol{x})\Vert_2 \leq \epsilon.
    $$
\end{lemma}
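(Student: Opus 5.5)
The plan is to show that trigonometric polynomials are dense in $L_2([-\pi,\pi]^d)$ and that the partial sums of the Fourier expansion converge to $f$. It is convenient to take square partial sums $S_k f(\boldsymbol{x}) = \sum_{\Vert\boldsymbol{j}\Vert_\infty\le k} c_{\boldsymbol{j}} e^{i\boldsymbol{j}\boldsymbol{x}}$, where
\[
c_{\boldsymbol{j}} = (2\pi)^{-d}\int_{[-\pi,\pi]^d} f(\boldsymbol{x})e^{-i\boldsymbol{j}\boldsymbol{x}}\,d\boldsymbol{x}.
\]
The argument then takes $f_k := S_k f$.

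First, observe that $\{(2\pi)^{-d/2}e^{i\boldsymbol{j}\boldsymbol{x}}\}_{\boldsymbol{j}\in\mathbb{Z}^d}$ is an orthonormal system in $L_2([-\pi,\pi]^d)$. This follows by Fubini and the one-dimensional computation $\int_{-\pi}^{\pi} e^{i(j-l)x}\,dx = 2\pi\delta_{jl}$. Consequently $S_k f$ is the orthogonal projection of $f$ onto the span $T_k$ of frequencies with $\Vert\boldsymbol{j}\Vert_\infty\le k$. So $S_k f$ is the best $L_2$ approximation to $f$ from $T_k$. Since $T_k \subset T_{k+1}$, the error $\Vert f - S_k f\Vert_2$ is nonincreasing in $k$. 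It therefore suffices to exhibit, for each $\epsilon$, \emph{some} trigonometric polynomial $p$ with $\Vert f-p\Vert_2\le\epsilon$. Indeed, if $p\in T_K$, then $\Vert f - S_K f\Vert_2\le \Vert f-p\Vert_2\le\epsilon$.

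Second, prove the density of trigonometric polynomials, in two approximation steps. (a) Continuous functions $g$ on $[-\pi,\pi]^d$ that vanish near the boundary, and hence extend $2\pi$-periodically, are dense in $L_2$. This is the standard density of $C_c$ in $L_2$ of a bounded domain, via Lusin's theorem or regularity of Lebesgue measure. Choose such a $g$ with $\Vert f-g\Vert_2\le\epsilon/2$. (b) Approximate $g$ uniformly by a trigonometric polynomial $p$ with $\sup|g-p|\le \epsilon/(2(2\pi)^{d/2})$. Then $\Vert g-p\Vert_2 \le (2\pi)^{d/2}\sup|g-p| \le \epsilon/2$. For the uniform approximation, apply the Stone--Weierstrass theorem on the torus $\mathbb{T}^d$. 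The trigonometric polynomials form a self-adjoint subalgebra, closed under conjugation and products since $e^{i\boldsymbol{j}\boldsymbol{x}}e^{i\boldsymbol{l}\boldsymbol{x}}=e^{i(\boldsymbol{j}+\boldsymbol{l})\boldsymbol{x}}$. This subalgebra contains the constants and separates points of the compact space $\mathbb{T}^d$. Alternatively, convolve $g$ with the product Fejér kernel $F_k(\boldsymbol{x})=\prod_{i=1}^d F_k(x_i)$, which is a positive approximate identity. The triangle inequality then gives $\Vert f-p\Vert_2\le\epsilon$.

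The only step requiring care is (b): one needs either Stone--Weierstrass on the compact torus or the approximate-identity properties of the Fejér kernel, namely nonnegativity, unit mass, and mass concentrating at $0$. Combined with uniform continuity of the periodic extension of $g$, these give uniform convergence. Everything else is Hilbert-space bookkeeping, so I expect no real obstacle. The statement is classical and could also simply be cited via completeness of the Fourier basis (Riesz--Fischer and Parseval), which yields $\Vert f-S_kf\Vert_2^2=\sum_{\Vert\boldsymbol{j}\Vert_\infty>k}(2\pi)^d|c_{\boldsymbol{j}}|^2\to 0$.
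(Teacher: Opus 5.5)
Your proof is correct. The paper does not prove this lemma. It quotes it as a classical fact (completeness of the trigonometric system in $L_2$) from the cited texts, so your argument is a self-contained substitute rather than a reconstruction of anything in the paper.

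Your route is the standard proof of that completeness: identify $S_k f$ as the orthogonal projection onto $T_k$, reduce to density of trigonometric polynomials, and get density from $C_c$-approximation plus Stone--Weierstrass (or the Fej\'er kernel) on $\mathbb{T}^d$. Your closing Parseval remark is exactly the one-line version the citations provide.

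What your version buys beyond the bare statement is twofold. The approximant can be taken to be the square partial sum $S_k f$ with the genuine coefficients $c_{\boldsymbol{j}}=(2\pi)^{-d}\int_{[-\pi,\pi]^d} f(\boldsymbol{x})e^{-i\boldsymbol{j}\boldsymbol{x}}\,d\boldsymbol{x}$. And the error is nonincreasing in $k$. The paper relies on the first point without saying so in Appendix B, where it bounds $\sum|c_{\boldsymbol{j}}|$ using that integral formula. So your formulation, not just ``some $k$-truncated series'', is the one the later arguments actually need.
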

We now move to the details of our model. We investigate how our model implements a truncated Fourier series, and how the different components of SAQNN circuit control the expressivity.

\subsection{Structure of SAQNN}
The Spectral Adaptive Quantum Neural Network is composed of three main parts: state preparation, spectrum selection and phase injection. These components are marked in colors in Figure \ref{fig1}. The design of our model is inspired from linear combination of unitaries (LCU) \cite{childs2012,gilyen2019}, which is widely used in Hamiltonian simulation \cite{berry2015,chakraborty2024}.

\subsubsection{State preparation}
In typical LCU settings, the state preparation step is to prepare arbitrary states on $\ket{0}$. But here we adopt the circuit with multiplexor-$R_y$ gates to implement states with arbitrary real amplitudes.  

The circuit of state preparation is shown below. Note that for the sake of simplicity, we have omitted the specific forms of the parameters, but we always provide a detailed explanation of how the parameters are encoded into the quantum gates.
 \begin{center}
    \scalebox{1.2}{
    \begin{tikzpicture}
    \begin{yquantgroup}
    \registers{
    qubit {} q[5];
    }
    \circuit{
    discard q[0];discard q[1];discard q[3];discard q[4];
    slash q[2];
    hspace {2mm} q[2];
    box {$P(\boldsymbol{\theta})$} q[2];
    
    hspace {2mm} q[2];
    }
    \equals
    \circuit{
    discard q[3];
    hspace {2mm} q;
    box {$R_y$} q[0];
    [this control/.append style={shape=yquant-rectangle,draw, /yquant/default fill}]
    box {$R_y$} q[1]|q[0];
    [this control/.append style={shape=yquant-rectangle,draw, /yquant/default fill}]
    box {$R_y$} q[2]|q[0-1];
    [draw=none] 
    box {\phantom{$R_y$}} q[3];
    [draw=none]               
    box {$\ddots$} (q[0-4]);
    hspace {5mm} q;
    [this control/.append style={shape=yquant-rectangle,draw, /yquant/default fill}]
    [name=r1]
    box {$R_y$} q[4]|q[0-2];
    hspace {2mm} q[1];
    }
    \end{yquantgroup}
    \node[fill=white, inner sep=1pt, text height = 2.5ex, scale =1] at (6.265,-2) {$\vdots$};
    \node [scale = 0.75] at (0.1,-1.3) {$m$};
    \end{tikzpicture}
    }
    \end{center} 
It is composed of multiplexor-$R_y$ gates on the first $i$ qubits for $1\leq i\leq m$. The $i$-th gate contains $2^{i-1}$ parameters, serving as the rotation angles in $2^{i-1}$ cases of control qubits. Denote the $i$-th multiplexor-$R_y$ as $MR_y^{(i)}$, then
\begin{equation*}
    MR_y^{(i)} = 
    \begin{pmatrix}
        R_y(\theta_{2^{i-1}})&& \\
        &\ddots&\\
        &&R_y(\theta_{2^{i}-1})
    \end{pmatrix}.
\end{equation*}

This block is key to the parameter optimality of our model, it generates arbitrary states with real amplitudes and with phase injection contributes to generation of coefficients in Fourier series. Appendix \ref{apd1} and \ref{apd4} shows how that works.

\subsubsection{Spectrum Selection}
In spectrum selection we encode inputs into single-qubit rotation gates to implement the items of truncated Fourier series. To be specific, each $V_r(\boldsymbol{x})$ gate is in the form of a controlled-$U_r$ gates. 
 \begin{center}
    \scalebox{1.2}{
    \begin{tikzpicture}
    \begin{yquantgroup}
    \registers{
    qubit {} q[2];
    }
    \circuit{
    slash q[0]; slash q[1];
    hspace {2mm} q[1];
    box {$V_r(\boldsymbol{x})$} (q[0],q[1]);
    hspace {2mm} q[1];
    }
    \equals
    \circuit{
    slash q[0];slash q[1];
    hspace {2mm} q;
    box {$U_r(\boldsymbol{x})$} q[1]|q[0];
    hspace {2mm} q;
    }
    \end{yquantgroup}
    \node at (3.2,0.2) {$\ket{r}$};
    \node [scale = 0.75] at (0.1,0.05) {$m$};
    \node [scale = 0.75] at (0.1,-0.45) {$d$};
    \node [scale = 0.75] at (2.4,0.05) {$m$};
    \node [scale = 0.75] at (2.4,-0.45) {$d$};
    \end{tikzpicture}
    }
    \end{center}
$V_r(\boldsymbol{x})$ applies a $U_r(\boldsymbol{x})$ to the bottom $d$ qubits under the condition of top $m$ qubits being $\ket{r}$.

For construction of $U_r$, we use the tensor product of $R_z$ gates, encoding the inputs and implement each basis of multivariate Fourier series.
 \begin{center}
    \scalebox{1.2}{
    \begin{tikzpicture}
    \begin{yquantgroup}
    \registers{
    qubit {} q[3];
    
    }
    \circuit{
    slash q[1];discard q[0];discard q[2];
    hspace {2mm} q[1];
    
    box {$U_r(\boldsymbol{x})$} q[1];
    hspace {2mm} q[1];
    }
    \equals
    \circuit{
    discard q[1];
    hspace {2mm} q;
    box {$R_z$} q[0];
    [draw=none]
    box {$\vdots$} q[1];
    box {$R_z$} q[2];
    hspace {2mm} q;
    }
    \end{yquantgroup}
    \node [scale = 0.75] at (0.1,-0.65) {$d$};
    \end{tikzpicture}
    }
    \end{center}
In matrix representation, 
\begin{equation*}
    U_r(\boldsymbol{x}) = R_z(-2j_1^{(r)}x_1)\otimes R_z(-2j_2^{(r)}x_2)\otimes R_z(-2j_d^{(r)}x_d),
\end{equation*}
with the equation holds that
\begin{equation*}
    \bra{0}^{\otimes d}U_r\ket{0}^{\otimes d} = e^{i\boldsymbol{j}^{(r)}\boldsymbol{x}},
\end{equation*}
$\boldsymbol{j}^{(r)}=(j_1^{(r)},j_2^{(r)},\cdots,j_d^{(r)})$ is the $r$-th frequency in Fourier series, satisfying $1\leq r\leq n$.
    
Also, the controlled-$U_r$ can be equally represented by a sequence of ``stair'' controlled-$R_z$ gates.
 \begin{center}
    \scalebox{1.2}{
    \begin{tikzpicture}
    \begin{yquantgroup}
    \registers{
    qubit {} q[3];
    
    }
    \circuit{
    slash q[0];slash q[1];discard q[2];
    hspace {2mm} q[1];
    box {$U_r(\boldsymbol{x})$} q[1]| q[0];
    hspace {2mm} q[1];
    }
    \equals
    \circuit{
    slash q[0];
    hspace {2mm} q[1];
    box {$R_z$} q[1]|q[0];
    [draw=none]
    box {$\cdots$} (q[0]);
    [draw=none]
    box {$\ddots$} (q[1-2]);
    box {$R_z$} q[2]|q[0];
    hspace {2mm} q[1];
    }
    \end{yquantgroup}
    \node at (0.85,0.2) {$\ket{r}$};
    \node at (3,0.2) {$\ket{r}$};
    \node at (4.5,0.2) {$\ket{r}$};
    \node [scale = 0.75] at (0.1,0.05) {$m$};
    \node [scale = 0.75] at (0.1,-0.5) {$d$};
    \node [scale = 0.75] at (2.5,0.05) {$m$};
    \end{tikzpicture}
    }
    \end{center}

The controlled-$R_z$ gates can be synthesized by two multi-qubit Toffoli gates and two $R_z$ gates, contributing to the analysis of circuit depth in Theorem \ref{thm2}.

\subsubsection{Phase injection}
To equip our model with the expressivity for complex numbers, we introduce phase injection technique into spectrum selection part. For each controlled layer inside, we apply a controlled phase gate to provide global phase for $U_r$, which is parameterized for training. That is,
 \begin{center}
    \scalebox{1.2}{
    \begin{tikzpicture}
    \begin{yquantgroup}
    \registers{
    qubit {} q[3];
    
    }
    \circuit{
    slash q[1];discard q[0];discard q[2];
    hspace {2mm} q[1];
    box {$S(\phi_r)$} q[1];
    hspace {2mm} q[1];
    }
    \equals
    \circuit{
    discard q[1];
    hspace {2mm} q[1];
    phase {} q[0]|q[2]; 
    hspace {2mm} q[1];
    }
    \end{yquantgroup}
    \node[fill=white, inner sep=1pt, text height = 2.5ex, scale =1] at (2.92,-0.7) {$\vdots$};
    \node at (2.92,0.2) {$\ket{r}$};
    \node at (3.2,-1) {$\phi_r$};
    \node [scale = 0.75] at (0.1,-0.5) {$m$};
    \end{tikzpicture}
    }
\end{center}
The representation is 
\begin{equation*}
    S(\phi_r) = e^{i\phi_r}\ket{r}\bra{r}+\sum_{j\neq r}\ket{j}\bra{j}.
\end{equation*}

As shown in Figure \ref{fig1}, our model starts on $\ket{0}$, chronologically applying a state preparation block, $n$ layers of spectrum selection block with phase injection, a multi-qubit Toffoli gate as padding layer and the inversion of state preparation block. The result is measured with a global observable $O=\ket{0}\bra{0}$. Here the state preparation block and phase injection are parameterized to for training and the inputs are encoded into spectrum selection block. We will see such a model can work as a universal approximator.

\subsection{Universal approximation properties}
To meticulously describe the UAP of our model, we restate Theorem \ref{thm1}, which indicates SAQNN can approximate any multivariant square-integrable functions under arbitrary accuracy.
\thmone*
The proof of Theorem \ref{thm1} is placed in Appendix \ref{apd1}. Remark that $O(1/\epsilon^2)$ times of measurement is needed to estimate the expectation value up to an additive accuracy $\epsilon$. But we can further use the amplitude estimation \cite{Brassard} to reduce the repetitive times to $O(1/\epsilon)$ with increasing circuit depth by $O(1/\epsilon)$. The classical square-root post-process on the expectation output is taken in our theorems, which extracts the amplitudes of final $\ket{0}$ state. The operation is physically reasonable since it does essentially the same thing as amplitude estimation.  

The rescale coefficient $a$ in Theorem \ref{thm1} is not a parameter to be classically trained, distinguishing SAQNN from hybrid model. It can be bounded and a fine-tuning strategy for numerical feasibility is introduced in Appendix \ref{apd2}. In Appendix \ref{apd2}, one can even see it's theoretical reliable to fix $a$ to a specific value all the way, which doesn't affect the correctness of Theorem \ref{thm1}. 

Although Fourier series have outstanding performance in approximation of periodic functions, it's easy to encounter problems in non-periodic cases (which is called \textit{Gibbs phenomenon} \cite{gottlieb1997} in approximation theory). However, our model has good extensibility. The function basis implemented can be shifted to Chebyshev series naturally in SAQNN to adapt to various scenarios in numerical approximation and machine learning. Details are presented in Appendix \ref{apd3}.

In our model, each layer of spectrum selection block controls a Fourier basis with specific frequency. The state preparation block and phase injection generate complex coefficients in Fourier series together, with the former determining modulus and the latter adjusting phases. The relations are also pictured in Figure \ref{fig1}. It can be seen that our model has adjustable capability by changing the layers of spectrum selection, with more rich accessible frequency spectrum contributing to more powerful expressivity. 

\subsection{Error bounds for Sobolev functions}
Theorem \ref{thm1} establishes the expressivity of SAQNN, but the circuit cost needed is absent. We go further to give an explicit analysis and show how SAQNN has advantages over SOTA of classical neural networks under reasonable settings. In this paper we consider the Sobolev spaces, results in more function spaces remain open problems.
\begin{definition}[Sobolev function spaces \cite{adams2003,canuto2006}]
    For $s\in \mathbb{N}$, Sobolev function space $H^s([-\pi,\pi]^d)$ is defined as
    \begin{align*}
        H^s([-\pi,\pi]^d):=\{f:D^{\boldsymbol{\alpha}}f \in L^2([-\pi,\pi]^d),\forall\Vert\boldsymbol{\alpha}\Vert_{1}\leq s\},
    \end{align*}
    where 
    \begin{equation*}
        D^{\boldsymbol{\alpha}}f=\frac{\partial f}{\partial^{\alpha_1} x_1\partial^{\alpha_2} x_2\cdots \partial^{\alpha_d} x_d}.
    \end{equation*}
    $H^s([-\pi,\pi]^d)$ is a normed space equipped with Sobolev norm
    \begin{equation*}
        \Vert f\Vert_{H^s} = (\sum_{|\boldsymbol{\alpha|}\leq s}\Vert D^{\boldsymbol{\alpha}}f\Vert_2^2)^{1/2}.
    \end{equation*}
\end{definition}

In usual settings of approximation theory, the Sobolev unit ball $f\in H^s_u([-\pi,\pi]^d)$ where $\Vert f\Vert_{H^s}\leq1$ is considered in analysis of error bounds. Similar to that in classical neural networks, we focus on the resource cost to be circuit width (number of qubits), circuit depth and number of parameters.

\thmtwo*

The proof of Theorem \ref{thm2} is placed in Appendix \ref{apd4}. 

To show our advantages, we make a comparison between the SAQNN and the SOTA of classical neural networks. For classical model, we focus on the feed-forward neural networks (FNNs) \cite{Rumelhart1986,Bebis1994,svozil1997} with rectified linear unit (ReLU) activation function \cite{nair2010,glorot2011}, where $\text{ReLU}(x):=\max\{0,x\}$. Such neural networks can fundamentally represent a vast range of classical neural networks. The function spaces $H^s_u([-\pi,\pi]^d)$ are selected to be the benchmark, and the comparison is made from circuit size (defined as product of circuit width and circuit depth) and parameter complexity perspectives. To our best knowledge, the best circuit size and parameter complexity of $L_2$-approximation of Sobolev spaces for FNNs are $O((2\pi)^{3d^2/4s}(1/\epsilon)^{d/2s})$ and $O((2\pi)^{3d^2/2s}(1/\epsilon)^{d/s})$, adapted from \cite{yang2025}. One can also refer to \cite{lu2021} and \cite{Jiao2021} for further understanding. In empirical sense, it's often satisfied in real-world learning tasks that $s$ is a small constant but $d$ can be large (e.g. $d=3072$ for CIFAR-10 \cite{cifar10}, $d=150528$ for ImageNet \cite{imagenet} and $d=2408448$ for Kinetics-400 dataset \cite{Kinetics400}). In the context of these machine learning tasks, the primary computational bottleneck arises from the input dimension $d$ rather than the approximation accuracy $\epsilon$. So we treat the target approximation error $\epsilon$ as a fixed constant (for instance, $\epsilon=0.01$). One can see FNNs suffer from curse of dimensionality of $d$ but the costs of our model are all in $\text{poly}(d)$ by Theorem \ref{thm2}. That yields the quantum advantages of SAQNN over classical neural networks when approximating high-dimensional functions under certain accuracy $\epsilon$ not so small.

To make our comparison comprehensive, we take the opposite scenario into consideration that pursuing accuracy under a fixed $d$. We show in this case our model has optimal parameter complexity among all linear and non-linear approximation methods. Details for the comparison are explained in Appendix \ref{apd5} and we give an overview here. Fixing $d$, the lower bounds of parameters to $L_2$-approximating Sobolev function spaces with linear combination of function bases (known as linear methods \cite{pinkus2012}) and continuous manifold (known as non-linear methods \cite{devore1998}, a typical example of which is classical feed-forward neural networks) are the same with regards to $\epsilon$. Concerning the optimal approximation for Sobolev spaces is Fourier series with lowest frequencies \cite{mallat1999,pinkus2012}, which can be implemented by SAQNN explicitly, our model achieves optimal parameter complexity among all linear and non-linear approximation methods.

\section{Numerical experiments}\label{sec4}
In this section, we conduct numerical experiments to verify the feasibility of our model. We focus on the case of $d=2$, the target functions are
$$
f_1(x_0,x_1) = e^{-(\sin^2 (x_0/2) + \sin^2(x_1/2))},
$$
$$
f_2(x_0,x_1) = \frac{\cos(x_0+x_1+\pi/6)+1}{2.5}
$$
for Fourier series, and
$$
f_3(x_0,x_1) = \frac{(x_0-x_1+1)^2}{9}
$$
for Chebyshev series. We sample 200 points $(x_0,x_1)$ from intervals $[-\pi,\pi]^2$ (in Fourier case) or $[-1,1]^2$ (in Chebyshev case) with their corresponding function values uniformly at random, to form the training and testing dataset, each of which has 100 data points. The loss function to be minimized is Mean Squared Error (MSE) between real function values and our model outputs. Adam optimizer with initial learning rate 0.01 (for $f_1$) or 0.05 (for $f_2,f_3$) and step scheduling strategy are adopted. The maximum training iterations is set to 80. Intuitive approximation results of numerical experiments are shown in Figure \ref{fig:numerical}. The MSE obtained on test datasets for $f_1$, $f_2$, $f_3$ are $7.20\times 10^{-4}$ ($n=49$), $2.62\times 10^{-4}$ ($n=3$) and $4.51\times 10^{-4}$ ($n=6$) respectively. 

\begin{figure}[h]
    \centering
    \begin{subfigure}[b]{0.5\textwidth}
        \centering
        \includegraphics[width=\textwidth]{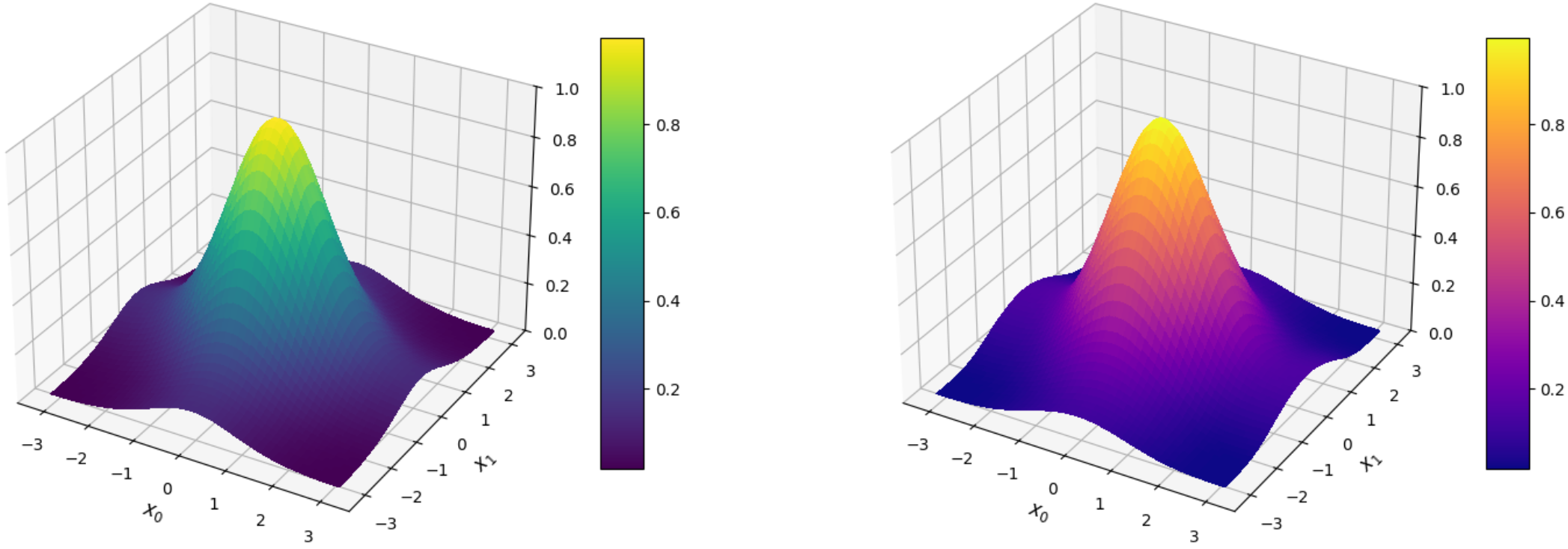}
        \caption{Approximating $f_1(x_0,x_1)$ with Fourier series.}
        \label{fig:sub1}
    \end{subfigure}

    \vspace{0.5cm}
    
    \begin{subfigure}[b]{0.5\textwidth} 
        \centering
        \includegraphics[width=\textwidth]{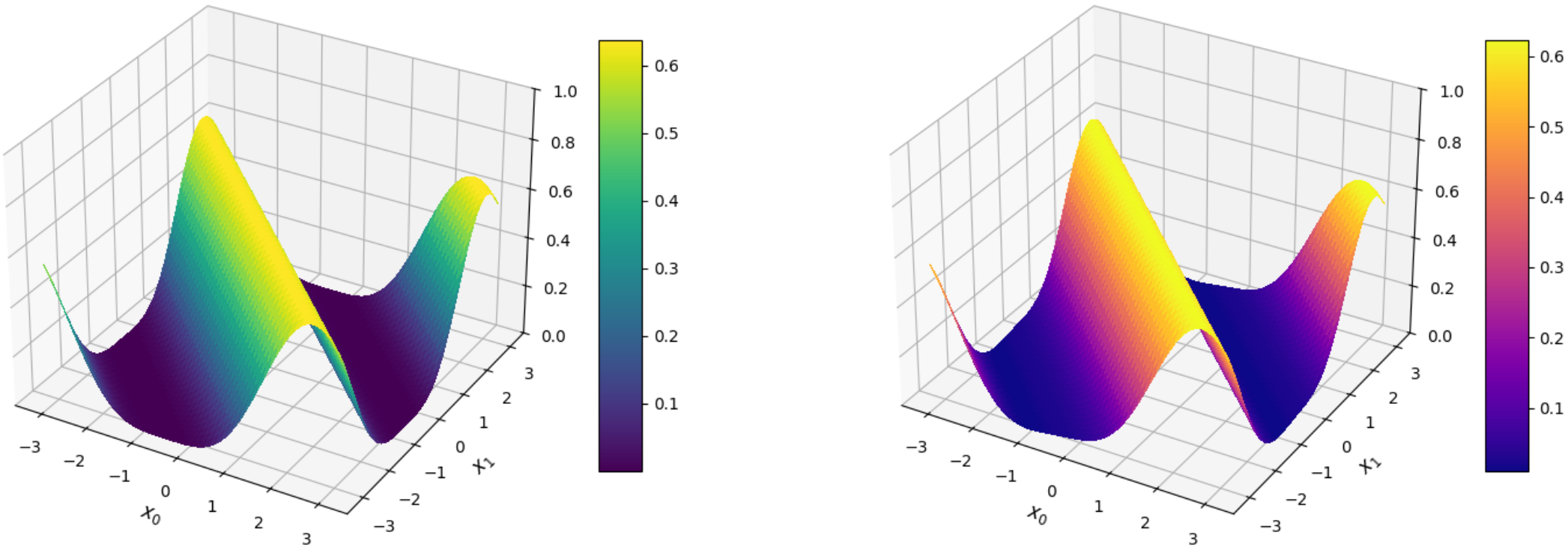}
        \caption{Approximating $f_2(x_0,x_1)$ with Fourier series.}
        \label{fig:sub2}
    \end{subfigure}

    \vspace{0.5cm}
    
    \begin{subfigure}[b]{0.5\textwidth}
        \centering
        \includegraphics[width=\textwidth]{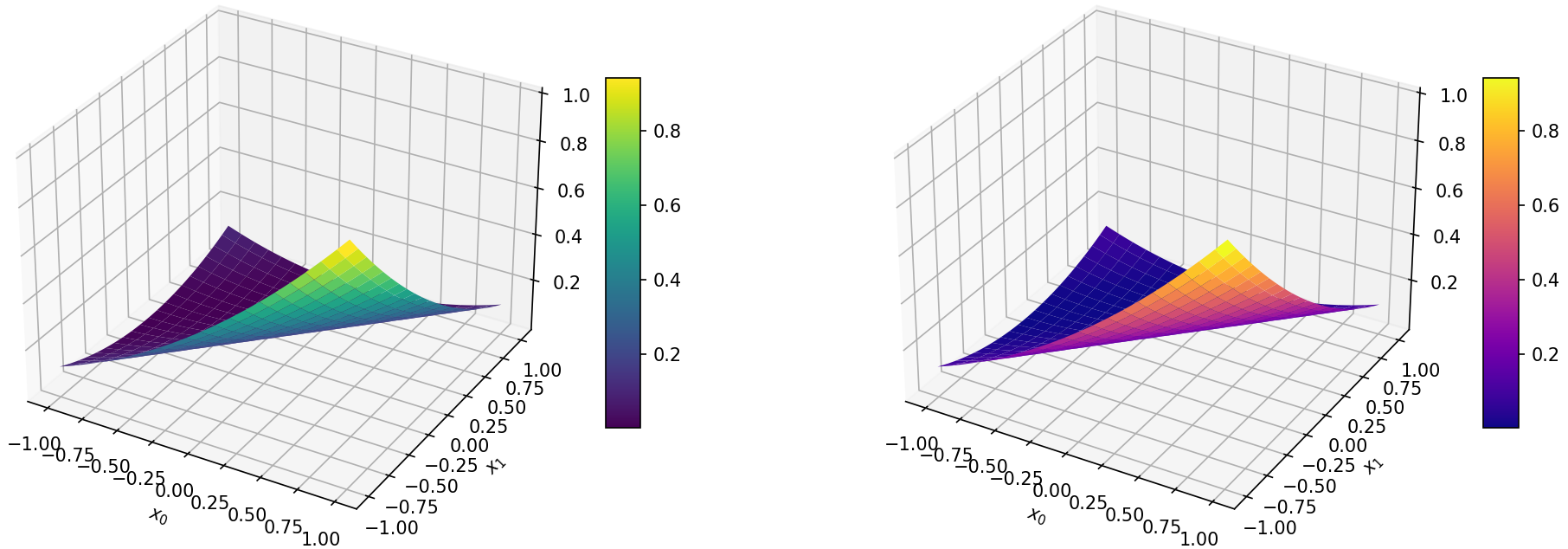}
        \caption{Approximating $f_3(x_0,x_1)$ with Chebyshev series.}
        \label{fig:sub3}
    \end{subfigure}

    \caption{Numerical results of Fourier and Chebyshev series approximating target functions. The target functions are visualized in the left graphics and the right ones are learned from our model. Both of them are plotted with 50$\times$50 grid samples of pre-defined intervals.}
    \label{fig:numerical}
\end{figure}

All of these experiments are conducted with qiskit \cite{qiskit}, a python tool developed by IBM for quantum computing, on Intel(R) Xeon(R) Gold 5222 3.80GHz CPU. Parameter optimization is based on finite difference method, encapsulated in qiskit\_algorithm package.


\section{Summary}\label{sec5}
In this work, we have introduced the Spectral Adaptive Quantum Neural Network (SAQNN), a constructive quantum machine learning model with powerful expressivity. By leveraging a design inspired by LCU, SAQNN explicitly implements truncated Fourier and Chebyshev series on quantum circuits, offering a transparent and interpretable mechanism for function approximation. We have mathematically established its universal approximation property for multivariate square-integrable functions and demonstrated its spectral adaptability, which allows the model to has adjustable expressivity with layers and switch between function bases to effectively mitigate issues like the Gibbs phenomenon in aperiodic function approximation. A central contribution of our analysis is the derivation of rigorous error bounds for approximating Sobolev functions under $L_2$ norm. We proved that SAQNN achieves an optimal parameter complexity, matching the fundamental $n$-width lower bounds for both linear and nonlinear approximation methods. Moreover, our comparative analysis with SOTA of classical ReLU-FNNs highlights the quantum advantages in high-dimensional settings. While classical models often suffer from the curse of dimensionality regarding input dimension $d$, SAQNN maintains polynomial resource scaling, suggesting strong potential for quantum machine learning in complex, high-dimensional computational tasks. 

Despite these theoretical advancements, several avenues for future research remain to enhance practical viability. 
While our model realizes quantum advantages, the circuit depth currently scales as $O(n \log n)$, presenting challenges for near-term quantum devices. Thus future works should focus on optimizing circuit synthesis to reduce depth without compromising expressivity. 
Additionally, analyses of the learning landscapes (such as the existence of barren plateau \cite{mcclean2018,holmes2022}, number of local minima \cite{bittel2021,larocca2023}, sample complexity \cite{Huang2021,cai2022}) of SAQNN are also interesting open problems. 
Furthermore, the numerical results in this paper only serves as the verification of feasibility of SAQNN, and we encourage the adaptation to various downstream machine learning tasks and examining the performance compared with the classical model, extending its utility beyond theoretical guarantees.

\section*{Acknowledgements}
The authors thank Junhong Nie, Shuai Yang and Yunfei Yang for helpful discussions. 

\section*{Impact Statement}
This paper presents work whose goal is to advance the field of Machine Learning. There are many potential societal consequences of our work, none of which we feel must be specifically highlighted here.


\bibliography{ref}
\bibliographystyle{icml2026}

\newpage
\appendix
\onecolumn
\section{Proof of Theorem \ref{thm1}}\label{apd1}

\begin{lemma}[\cite{sun2023}]\label{lemma:real}
    State preparation block of SAQNN can prepare any state with real amplitudes.
\end{lemma}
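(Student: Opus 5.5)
We prove the lemma by induction on the number of qubits $m$, using the cascade of uniformly controlled $R_y$ rotations $MR_y^{(1)},\dots,MR_y^{(m)}$ of the state preparation block. Fix a target real unit vector $\ket{\psi}=\sum_{b\in\{0,1\}^m}\psi_b\ket{b}$ with $\psi_b\in\mathbb{R}$ and $\sum_b\psi_b^2=1$. For every prefix $p\in\{0,1\}^{i}$ with $0\le i\le m$, define the marginal norm
\begin{equation*}
\nu_p=\Bigl(\sum_{b:\,b\text{ starts with }p}\psi_b^2\Bigr)^{1/2}.
\end{equation*}
The invariant to prove is that after applying $MR_y^{(1)},\dots,MR_y^{(i)}$ to $\ket{0}^{\otimes m}$, the state equals $\sum_{p\in\{0,1\}^i}\alpha_p\ket{p}\ket{0}^{\otimes(m-i)}$, where $\alpha_p$ is real and $|\alpha_p|=\nu_p$. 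For $i=m$ we will additionally need $\alpha_p=\psi_p$, so signs must be handled at the last layer.

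\textbf{Step 1 (base case).} For $i=1$, $R_y(\theta_1)\ket{0}=\cos(\theta_1/2)\ket{0}+\sin(\theta_1/2)\ket{1}$. We choose $\theta_1=2\,\mathrm{atan2}(\nu_1,\nu_0)$, which gives the amplitudes $\nu_0$ and $\nu_1$.

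\textbf{Step 2 (inductive step).} Assume the invariant holds after $i-1$ layers. The multiplexor $MR_y^{(i)}$ acts on qubit $i$ with angle $\theta_p$ conditioned on the first $i-1$ qubits being $\ket{p}$, and it maps
\begin{equation*}
\alpha_p\ket{p}\ket{0}\mapsto\alpha_p\bigl(\cos(\theta_p/2)\ket{p0}+\sin(\theta_p/2)\ket{p1}\bigr).
\end{equation*}
For $i<m$ we pick $\theta_p=2\,\mathrm{atan2}(\nu_{p1},\nu_{p0})$. When $\nu_p>0$ this gives $|\alpha_{pc}|=\nu_{pc}$ because $\nu_{p0}^2+\nu_{p1}^2=\nu_p^2$. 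When $\nu_p=0$ any angle works. All amplitudes stay real because $R_y$ is a real matrix.

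\textbf{Step 3 (signs).} This is the only delicate point. We handle it by choosing every earlier angle so that all $\alpha_p\ge0$; the atan2 choice above already guarantees this, since its half-angle lies in $[0,\pi/2]$. At the last layer we set $\theta_p=2\,\mathrm{atan2}(\psi_{p1},\psi_{p0})$, letting the half-angle range over the full circle $(-\pi,\pi]$. Since $\cos(\theta_p/2)$ and $\sin(\theta_p/2)$ can take any sign pattern, this yields $\alpha_p\cos(\theta_p/2)=\nu_p\cdot\psi_{p0}/\nu_p=\psi_{p0}$, and likewise $\alpha_p\sin(\theta_p/2)=\psi_{p1}$. The case $\nu_p=0$ is trivial because then $\psi_{p0}=\psi_{p1}=0$.

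\textbf{Step 4 (conclusion).} Collecting the angles defined in Steps 1--3 into $\boldsymbol{\theta}$ gives $P(\boldsymbol{\theta})\ket{0}^{\otimes m}=\ket{\psi}$. The construction uses exactly the $2^m-1$ parameters of the block, matching the parameter count of Lemma~\ref{lemma1}.
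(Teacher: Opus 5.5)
Your proof is correct and is the standard construction. The paper gives no argument of its own and simply cites \cite{sun2023}; its real-amplitude state preparation is exactly your cascade of uniformly controlled $R_y$ rotations, with angles fixed by the marginal norms $\nu_p$, nonnegative amplitudes on every layer but the last, and signs absorbed at the last layer via $\mathrm{atan2}$.

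One small correction: the count of $2^m-1$ angles comes from the description of $P(\boldsymbol{\theta})$ in Section~\ref{sec3}, where the $i$-th multiplexor carries $2^{i-1}$ angles. It does not come from Lemma~\ref{lemma1}, which only bounds CNOT counts.
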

\thmone*
\begin{proof}
Firstly, it is known by Lemma \ref{lemma:approx} that any arbitrary multivariant square-integrable function $f(\boldsymbol{x})$ can be $\epsilon$-approximated by a truncated Fourier series. Since $f:[-\pi,\pi]^d\rightarrow [0,1]$ is square-integrable, for any $\epsilon>0$, there exists $k\in \mathbb{N}^+$ and a finite series $f_k$ s.t.
$$
\Vert f(\boldsymbol{x})-f_k(\boldsymbol{x})\Vert_2 \leq \epsilon,
$$
where
$$
f_k = \sum_{j_1,j_2,...,j_d} c_{\boldsymbol{j}}e^{i\boldsymbol{j}\boldsymbol{x}}.
$$
Let the sum of the modulus of the coefficients of $f_k$ be $\sum_{j_1,j_2,...,j_m}|c_{\boldsymbol{j}}| = a\geq 0$. Also, we have $n=(2k+1)^d$, thus the number of target qubits of $P(\boldsymbol{\theta})$ is $m=\lceil d\log (2k+1)\rceil$ in SAQNN. 

Next, we prove the inequality in the theorem. For readability we change the index of series to $r$. That is, 
$$
f_k = \sum_{r=1}^{n} c_{r}e^{i\boldsymbol{j}^{(r)}\boldsymbol{x}}.
$$
Based on section \ref{sec3} and Lemma \ref{lemma:real}, the state preparation block can prepare any arbitrary state with real amplitudes. For any $a_r\in \mathbb{R}$ and $\sum_{r}a_r^2=1$, there exists $\boldsymbol{\theta}$ s.t.
$$
P(\boldsymbol{\theta})\ket{0}^m = \sum_{r=0}^{2^m-1} a_r \ket{r}.
$$

We can pad the remainder of series with 0s and use multi-qubit Toffoli gate before inversion of state preparation. The multi-qubit padding layer is under condition of top $m$ qubits being $\ket{0}$. For conditions of $\ket{r}$ that $r > n$, no controlled gate is inserted into the circuit and we can view them as some controlled-$I_2$ gates in the derivation. To be specific, denote $V_{\boldsymbol{\phi}}(\boldsymbol{x})$ as the matrix with $n$ layers of spectrum selection block with phase injection and padding layer, then
$$
V_{\boldsymbol{\phi}}(\boldsymbol{x})\ket{r}\ket{\varphi} = \ket{r}(e^{\phi_r}V_r\ket{\varphi}),
$$
for $1\leq r\leq n$,
$$
V_{\boldsymbol{\phi}}(\boldsymbol{x})\ket{0}\ket{\varphi} = \ket{0}(X\otimes I_{2^{d-1}}\ket{\varphi}),
$$
and 
$$
V_{\boldsymbol{\phi}}(\boldsymbol{x})\ket{r}\ket{\varphi} = \ket{r}(I_{2^{d}}\ket{\varphi}),
$$
for $n+1\leq r\leq 2^m-1$.

Then
\begin{align*}
    \bra{0}U_{\boldsymbol{\theta,\phi}}(\boldsymbol{x})\ket{0} 
    &= \bra{0}P(\boldsymbol{\theta})^\dagger V_{\boldsymbol{\phi}}(\boldsymbol{x}) P(\boldsymbol{\theta})\ket{0} \\
    &= \bra{0}P(\boldsymbol{\theta})^\dagger V_{\boldsymbol{\phi}}(\boldsymbol{x}) (\sum_{r=0}^{2^m-1} a_r \ket{r})\ket{0} \\
    &= \bra{0}P(\boldsymbol{\theta})^\dagger(a_0\ket{0}(X\otimes I_{2^{d-1}})\ket{0}+\sum_{r=1}^{n} a_r \ket{r}e^{\phi_r}U_r(\boldsymbol{x})\ket{0}+\sum_{r=n+1}^{2^m-1} a_r \ket{r}\ket{0}) \\
    &= \bra{0}(\sum_{r=0}^{2^m-1} a_r^\dagger \bra{r})(a_0\ket{0}(X\otimes I_{2^{d-1}})\ket{0}+\sum_{r=1}^{n} a_r \ket{r}e^{\phi_r}U_r(\boldsymbol{x})\ket{0}+\sum_{r=n+1}^{2^m-1} a_r \ket{r}\ket{0}) \\
    &= |a_0|^2 \bra{0}X\otimes I_{2^{d-1}}\ket{0}
    +
    \sum_{r=1}^{n} |a_r|^2e^{\phi_r} \bra{0}U_r(\boldsymbol{x})\ket{0} 
    +
    \sum_{r=n+1}^{2^m-1} |a_r|^2 \bra{0}I_{2^d}\ket{0} \\
    &= \sum_{r=1}^{n} |a_r|^2e^{\phi_r} \bra{0}U_r(\boldsymbol{x})\ket{0} 
    + \sum_{r=n+1}^{2^m-1} |a_r|^2 \\
    &= \sum_{r=1}^{n} |a_r|^2e^{\phi_r} e^{i\boldsymbol{j}^{(r)} \boldsymbol{x}}
    + \sum_{r=n+1}^{2^m-1} |a_r|^2 
\end{align*}

For arbitrariness of state preparation block, we can assign $|a_r|^2=|c_r|/a$ and $\phi_r = \arg c_r$ for $0\leq r\leq n-1$ and $|a_r|=0$ for $r\geq n$, satisfying $\sum_{r=0}^{2^m-1} |a_r|^2 = 1$. Then we have
$$
\bra{0}U_{\boldsymbol{\theta,\phi}}(\boldsymbol{x})\ket{0} 
= 
\sum_{r} |a_r|^2e^{\phi_r} e^{i\boldsymbol{j}^{(r)} \boldsymbol{x}}
=
f_k/a
=
\sum_{r} c_{r}e^{i\boldsymbol{j}^{(r)}\boldsymbol{x}}/a.
$$
It holds that
$$
\bra{0}
U_{\boldsymbol{\theta,\phi}}(\boldsymbol{x})^\dagger
O
U_{\boldsymbol{\theta,\phi}}(\boldsymbol{x})
\ket{0}^{1/2}
=
(\bra{0}
U_{\boldsymbol{\theta,\phi}}(\boldsymbol{x})^\dagger
\ket{0}
\bra{0}
U_{\boldsymbol{\theta,\phi}}(\boldsymbol{x})
\ket{0})^{1/2}
=
|f_k|/a.
$$
So there exists $\boldsymbol{\theta}$ (which yields arbitrary $a_r$), $\boldsymbol{\phi}$ and rescale coefficient $a>0$ s.t.
$$
\Vert
a\bra{0}
U_{\boldsymbol{\theta,\phi}}(\boldsymbol{x})^\dagger
O
U_{\boldsymbol{\theta,\phi}}(\boldsymbol{x})
\ket{0}^{1/2}
- f(\boldsymbol{x})
\Vert_2
=
\Vert
|f_k| - f
\Vert_2
\leq 
\Vert
f_k - f
\Vert_2
\leq
\epsilon.
$$
\end{proof}
Remark that if $a$ does not match the sum of the modulus of the coefficients of $f_k$, we can set $|a_0|^2=1-\sum_{r=1}^{n} (|c_r|/a)$. It works as a regulator and allows any $a\geq \sum_{r=1}^{n}|c_{r}|$ without affecting the correctness of Theorem \ref{thm1}.


\section{Fine-tuning strategy of the rescale coefficient}\label{apd2}
The rescale coefficient $a$ in Theorem \ref{thm1} works as a role to break the normalization condition for coefficients of the state prepared, which corresponds to the coefficients in Fourier series. It greatly expands the expressivity of the model but $a$ is usually unknown in machine learning tasks. However, the sum of modulus of coefficients of a truncated Fourier series can be bounded by number of items $n$. Firstly we have
\begin{equation*}
    \sum_{\boldsymbol{j}} |c_{\boldsymbol{j}}| 
    = \sum_{\boldsymbol{j}}\frac{1}{(2\pi)^d}\int_{[-\pi,\pi]^d}f(\boldsymbol{x})e^{-i\boldsymbol{j}\boldsymbol{x}} 
    \leq \sum_{\boldsymbol{j}}\frac{1}{(2\pi)^d}\int_{[-\pi,\pi]^d}|f(\boldsymbol{x})e^{-i\boldsymbol{j}\boldsymbol{x}}| 
    \leq \sum_{\boldsymbol{j}}\frac{1}{(2\pi)^d}\int_{[-\pi,\pi]^d}|f(\boldsymbol{x})||e^{-i\boldsymbol{j}\boldsymbol{x}}|. 
\end{equation*}

Since $|f(\boldsymbol{x})|\leq 1$ and $|e^{-i\boldsymbol{j}\boldsymbol{x}}|\leq 1$, 
\begin{equation*}
    \sum_{\boldsymbol{j}} |c_{\boldsymbol{j}}| 
    \leq \sum_{\boldsymbol{j}}\frac{1}{(2\pi)^d}\int_{[-\pi,\pi]^d} 1 
    = \sum_{\boldsymbol{j}}\frac{1}{(2\pi)^d} (2\pi)^d
    = n.
\end{equation*}

Denoting the re-scale coefficient to be fine-tuned as $a'$.
In proof of Theorem \ref{thm1} we can see that our model without rescaling can express arbitrary Fourier series with sum of modulus of coefficients less than or equal to 1. So it still holds when we set $a'$ to its upper bound, but on real-world devices the numerical precision will be insufficient and the learning landscape is poor when $a$ is actually far away from the upper bound, which is common. To make our model practical, we further give a fine-tuning strategy to find the appropriate $a'$. Our strategy consists of two steps:
\begin{enumerate}
    \item Normalize the dataset, remapping the value to interval [0,1] with setting the max value to 1,
    \item Start from $a'=1$, repeat multiplying $a'$ by 2 until it learns well.
\end{enumerate}
In our strategy, the first step ensures that $\max_{\boldsymbol{x}} f(\boldsymbol{x})=1$. And we have a theoretical upper bound for $f_k$:
\begin{equation*}
f_k(\boldsymbol{x}) = \sum_{\boldsymbol{j}}^{\Vert \boldsymbol{j} \Vert_\infty \leq k}c_{\boldsymbol{j}}e^{i\boldsymbol{j}\boldsymbol{x}}
\leq 
\sum_{\boldsymbol{j}}^{\Vert \boldsymbol{j}\Vert_\infty \leq k}|c_{\boldsymbol{j}}e^{i\boldsymbol{j}\boldsymbol{x}}|
\leq
\sum_{\boldsymbol{j}}^{\Vert \boldsymbol{j}\Vert_\infty \leq k}|c_{\boldsymbol{j}}||e^{i\boldsymbol{j}\boldsymbol{x}}|
\leq
\sum_{\boldsymbol{j}}^{\Vert \boldsymbol{j}\Vert_\infty \leq k} |c_{\boldsymbol{j}}| = a.
\end{equation*}
While the $L_2$-approximation $f_k$ might strictly have a peak slightly less than 1, the constraint $a < 1$ would impose an artificial hard ceiling on the model. This would render the model mathematically incapable of representing any function values in the range $(a, 1]$. To avoid limiting the model's expressivity and to ensure it has the capacity to approximate the normalized target functions, we must permit $a \geq 1$. Therefore, starting the fine-tuning search from $a'=1$ is the minimal requirement to ensure the hypothesis space covers the target function's scale, which is set as the starting point in step 2.

With step 2, we can see the training loss would be high unless it exceeds the real value for the first time. Right after that, at least we have $a'\geq a/2$, which means the amplitudes we try to learn for series have sufficient proportion. 

\section{Switch to Chebyshev series and model modifications}\label{apd3}
Fourier series are effective for periodic functions but struggle with aperiodic data. We resolve this by introducing the Chebyshev series \cite{mason2002}, creating a complementary framework where each kind of function basis adapts to different approximation needs. This appendix outlines how our model implements Chebyshev series through simple modifications.

The Chebyshev series of a univariant function $f(x)\in L^2([-1,1])$ is the linear combination of Chebyshev polynomials, 
$$
f(x) = \sum_{j=0}^{\infty} b_j T_j(x),
$$
where $T_j(x)$ satisfies the recursion relations 
\begin{align*}
    T_0(x) &= 1, \\
    T_1(x) &= x, \\
    T_{j+1}(x) &= 2xT_j(x)-T_{j-1}(x).
\end{align*}
The closed form of $T_j(x)$ is
$$
T_j(x) = \cos{(j\arccos{x})}.
$$
The Chebyshev polynomials of multivariant function $f(\boldsymbol{x})\in L^2([-1,1]^d)$ is built by tensor products of univariant Chebyshev polynomials,
$$
f(\boldsymbol{x}) = \sum_{\boldsymbol{j}=\boldsymbol{0}}^{\infty} b_{\boldsymbol{j}} T_{j_1}(x_1)T_{j_2}(x_2)...T_{j_d}(x_d).
$$
Similar to the results of Fourier series, any square-integrable multivariant function $f(\boldsymbol{x})$ can be approximated by truncated Chebyshev series to arbitrary accuracy. A $k$-truncated Chebyshev series is defined as
$$
f_k(\boldsymbol{x}) = \sum_{\boldsymbol{j}=\boldsymbol{0}}^{\Vert \boldsymbol{j}\Vert_{\infty}\leq k} b_{\boldsymbol{j}} T_{j_1}(x_1)T_{j_2}(x_2)...T_{j_d}(x_d).
$$
\begin{lemma}(Chebyshev approximation \cite{mason2002,canuto2006})
    For any $f\in L_2([-1,1]^d)$ and $\epsilon>0$, there exists a $k$-truncated Chebyshev series $f_k$ s.t.
    $$
    \Vert f(\boldsymbol{x})-f_k(\boldsymbol{x})\Vert_2 \leq \epsilon.
    $$
\end{lemma}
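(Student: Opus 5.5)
The statement is the multivariate Chebyshev analogue of Lemma \ref{lemma:approx}: every $f\in L_2([-1,1]^d)$ is $L_2$-approximable by some $k$-truncated tensor Chebyshev series. I will reduce it to the Fourier case via the substitution $x_i=\cos t_i$.

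\textbf{Step 1: density of polynomials.} I first show that polynomials in $\boldsymbol{x}$ are dense in $L_2([-1,1]^d)$ with the Lebesgue measure. Continuous functions are dense in $L_2$ of a compact box, since Lebesgue measure is regular. By Stone--Weierstrass, polynomials are uniformly dense in $C([-1,1]^d)$. Uniform approximation implies $L_2$ approximation, because $\Vert g\Vert_2\le 2^{d/2}\Vert g\Vert_\infty$. Hence for any $\epsilon>0$ there is a continuous $g$ with $\Vert f-g\Vert_2\le\epsilon/2$, and then a polynomial $p$ with $\Vert g-p\Vert_2\le\epsilon/2$.

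\textbf{Step 2: rewrite $p$ as a truncated Chebyshev series.} Suppose the polynomial $p$ has degree at most $k$ in each variable. The univariate family $T_0,\dots,T_k$ spans all polynomials of degree at most $k$: by the recursion, $T_j$ has exact degree $j$ with leading coefficient $2^{j-1}$ for $j\ge1$, so the change of basis from monomials is triangular and invertible. Taking tensor products, the set $\{T_{j_1}(x_1)\cdots T_{j_d}(x_d):\Vert\boldsymbol{j}\Vert_\infty\le k\}$ spans exactly the polynomials of coordinatewise degree at most $k$. Therefore $p$ is itself a $k$-truncated Chebyshev series $f_k$. Combining with Step 1 gives $\Vert f-f_k\Vert_2\le\epsilon$.

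\textbf{Alternative route through Lemma \ref{lemma:approx}.} One can instead use the map $x_i=\cos t_i$, $t_i\in[0,\pi]$. It sends $T_j(x_i)$ to $\cos(jt_i)$, and the even extension of $f(\cos\boldsymbol{t})$ to $[-\pi,\pi]^d$ has a pure cosine Fourier series. The drawback is the Jacobian: $dx=\sin t\,dt$ is not bounded below, so $L_2$ closeness in the weighted $t$-variables does not directly give $L_2$ closeness in $x$. Proving the estimate in the Chebyshev-weighted norm and then transferring it is the step I expect to be the main obstacle. For this reason I prefer the density argument of Steps 1--2, which avoids the Jacobian issue entirely.
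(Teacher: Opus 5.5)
Your proof is correct, and it takes a different route from the paper's. The paper gives no argument for this lemma; it only cites \cite{mason2002,canuto2006}. The standard result there is that the tensor Chebyshev polynomials form a complete orthogonal system in the weighted space $L^2_w$, with $w(\boldsymbol{x})=\prod_i(1-x_i^2)^{-1/2}$, so the Chebyshev partial sums (orthogonal projections) converge in $L^2_w$.

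Your Stone--Weierstrass argument plus the triangular change of basis works differently. It operates directly in the unweighted norm of the statement, uses no orthogonality, and covers every $f\in L_2([-1,1]^d)$. The price is twofold: your $f_k$ is just some polynomial rather than the truncation of $f$'s Chebyshev expansion, and you get no rate in $k$. The projection route gives both the canonical truncation and, for smooth $f$, explicit rates. For the lemma as stated, existence is all that is needed, and the same holds for its use in the circuit, which realizes arbitrary coefficients up to the rescaling $a$. Moreover, for real $f$ your coefficients are real, which is consistent with $\phi_r\in\{0,\pi\}$.

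Conversely, the cited route does not by itself reach all of $L_2$. Since $w\ge 1$, $L^2_w$-convergence does transfer to the unweighted norm for free, but only for $f\in L^2_w\subsetneq L_2$. Take $f$ behaving like $(1-x_1)^{-1/2}/|\log(1-x_1)|$ near $x_1=1$. Then $f\in L_2$, but $\int f\,w\,d\boldsymbol{x}=\infty$, so even the coefficient of $T_0\cdots T_0$ does not exist. A density step like your Step 1 is therefore needed on that route as well.

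One correction to your side remark on the $x_i=\cos t_i$ route: you have the obstacle backwards. The transfer from the $t$-side estimate to the $x$-norm is free, because
\[
\int_{[-1,1]^d}|h|^2\,d\boldsymbol{x}=\int_{[0,\pi]^d}|h(\cos\boldsymbol{t})|^2\prod_i\sin t_i\,d\boldsymbol{t}\le\int_{[0,\pi]^d}|h(\cos\boldsymbol{t})|^2\,d\boldsymbol{t}.
\]
The real difficulty is the opposite direction: $f(\cos\boldsymbol{t})$ need not lie in $L_2(d\boldsymbol{t})$. Once you replace $f$ by the continuous $g$ from your Step 1, that route also works. Apply Lemma~\ref{lemma:approx} to the even extension of $g(\cos\boldsymbol{t})$, then average the resulting trigonometric polynomial over all coordinate sign flips; this does not increase the error. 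The result is a polynomial in $\cos(j_it_i)=T_{j_i}(\cos t_i)$, i.e., a truncated Chebyshev series in $\boldsymbol{x}$.
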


To implement Chebyshev series, we replace each $R_z$ gate in spectrum selection block with $R_y$ gate. For example, to implement $T_{j_1^{(r)}}(x_1)$, we use 
\begin{equation*}
    R_y(2j_1^{(r)}\arccos{x_1}) = 
    \begin{pmatrix}
        T_{j_1^{(r)}}(x_1) & * \\
        * & *
    \end{pmatrix},
\end{equation*}
satisfying
\begin{equation*}
    \bra{0}U_r(\boldsymbol{x})\ket{0} = T_{j^{(r)}_1}(x_1)T_{j^{(r)}_2}(x_2)...T_{j^{(r)}_d}(x_d).
\end{equation*}

Here we also change the pre-process of inputs compared to that in implementation of Fourier series, still getting rid of classical trainable weights.

Another difference is that the coefficients of Chebyshev series are real numbers. To make it suitable for Chebyshev case, the phase injection degenerates into the case whether $\phi_i = 0$ or $\phi_i = \pi$, corresponding to the 1 or -1 multiplied on each Chebyshev polynomial. With these changes to our model, we can prove UAP of SAQNN based on Chebyshev series in the completely similar way of Theorem \ref{thm1}, details omitted here.

\section{Proof of Theorem \ref{thm2}}\label{apd4}
We go further to show the quantum resource needed to obtain UAP. Since Sobolev spaces and Fourier series have been considerably studied, we first introduce a cutting-edge lemma adapted from approximation theory and then give a cost analysis for our model.
\begin{definition}(Approximation numbers \cite{kuhn2016})
    The $n$-th approximation number of bounded linear operator $T:X\rightarrow Y$ between two Banach spaces is 
    $$
    a_n(T:X\rightarrow Y) := \inf_{A:X\rightarrow Y}\{\Vert T-A\Vert:\text{rank}~A\leq n\}.
    $$
\end{definition}
Approximation numbers describe accuracy of independent $n$ terms to uniformly approximate the whole function space under certain mapping relations.
Since we consider the optimal linear $L_2$-approximation by Sobolev functions of finite rank, we focus on the map $\text{id}: H^s_u([-\pi,\pi]^d)\rightarrow L_2([-\pi,\pi]^d)$, which is called Sobolev embeddings. 
\begin{lemma}[Error bounds of Fourier series approximating Sobolev functions \cite{kuhn2016}\label{lem:error}]
    For $s>0$, we have
    $$
    a_n(\text{id}: H^s_u([-\pi,\pi]^d)\rightarrow L_2([-\pi,\pi]^d))
    \leq
    \begin{cases}
        4^s\left(\frac{\log (1+d/\log n)}{\log n}\right)^{s/2} &:d\leq n\leq 2^d\\
        4^s d^{-s/2}n^{-s/d} &: n\geq 2^d,
    \end{cases}
    $$
    $a_n$ is reached by minimal $n$ terms of Fourier series regarding with $\Vert \boldsymbol{j}\Vert_2$ and $A$ is the projection of $\text{id}: H^s_u([-\pi,\pi]^d)\rightarrow L_2([-\pi,\pi]^d)$ on the space spanned by optimal Fourier series.
\end{lemma}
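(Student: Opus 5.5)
The plan is to diagonalize the Sobolev embedding in the Fourier basis, so that the approximation numbers become sorted inverse weights, and then bound those weights by counting lattice points in Euclidean balls.

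\textbf{Step 1: diagonalize.} By Parseval, for $f=\sum_{\boldsymbol{j}}c_{\boldsymbol{j}}e^{i\boldsymbol{j}\boldsymbol{x}}$ we have $D^{\boldsymbol{\alpha}}f=\sum_{\boldsymbol{j}}(i\boldsymbol{j})^{\boldsymbol{\alpha}}c_{\boldsymbol{j}}e^{i\boldsymbol{j}\boldsymbol{x}}$. Hence
\begin{equation*}
\Vert f\Vert_{H^s}^2=(2\pi)^d\sum_{\boldsymbol{j}}w(\boldsymbol{j})|c_{\boldsymbol{j}}|^2,\qquad w(\boldsymbol{j})=\sum_{\Vert\boldsymbol{\alpha}\Vert_1\le s}\boldsymbol{j}^{2\boldsymbol{\alpha}},
\end{equation*}
while $\Vert f\Vert_2^2=(2\pi)^d\sum_{\boldsymbol{j}}|c_{\boldsymbol{j}}|^2$. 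So $\text{id}:H^s\to L_2$ is unitarily equivalent to the diagonal operator on $\ell_2(\mathbb{Z}^d)$ with entries $w(\boldsymbol{j})^{-1/2}$.

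For a compact diagonal operator between Hilbert spaces, the approximation numbers are the non-increasing rearrangement of the diagonal entries. Thus $a_n=\sigma_{n+1}$, the $(n+1)$-th largest value of $w(\boldsymbol{j})^{-1/2}$. The infimum is attained by the orthogonal projection onto the $n$ frequencies with the smallest weights. This gives the structural claim of the lemma about the optimal $A$, provided I can show that ordering by $w$ is comparable to ordering by $\Vert\boldsymbol{j}\Vert_2$.

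\textbf{Step 2: compare $w$ with a radial weight.} I will aim for
\begin{equation*}
w(\boldsymbol{j})\ \ge\ c^{-s}\,(1+\Vert\boldsymbol{j}\Vert_2^2)^s
\end{equation*}
with an absolute constant $c$; this constant is what eventually produces the prefactor $4^s$. With such a bound, if $N(R)=\#\{\boldsymbol{j}\in\mathbb{Z}^d:\Vert\boldsymbol{j}\Vert_2^2\le R^2\}\le n$, then $a_n\le c^{s/2}(1+R^2)^{-s/2}$. The whole problem thus reduces to finding, for each $n$, the largest $R$ with $N(R)\le n$, i.e. to upper bounds on lattice point counts.

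\textbf{Step 3: count lattice points in two regimes.}

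\emph{Large $n\ge 2^d$.} I would bound $N(R)$ by the volume of the ball of radius $R+\sqrt d/2$. Using $\mathrm{vol}(B_d)^{1/d}\asymp d^{-1/2}$, this gives $R^2\gtrsim d\,n^{2/d}$ once $n\ge2^d$, the regime where $R$ dominates $\sqrt d$. Substituting, $(1+R^2)^{-s/2}\lesssim d^{-s/2}n^{-s/d}$.

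\emph{Small $d\le n\le 2^d$.} Here the relevant radii satisfy $R^2=m\ll d$. The dominant lattice points are $\{-1,0,1\}$-vectors, and more generally
\begin{equation*}
N(\sqrt m)\le\sum_{\ell\le m}\binom{d}{\ell}2^{\ell}\cdot(\text{lower-order factors}).
\end{equation*}
This is at most roughly $(Cd/m)^m$. Solving $(Cd/m)^m\approx n$ gives $m\approx \log n/\log(1+d/\log n)$. Plugging $R^2=m$ in yields the bound $\big(\log(1+d/\log n)/\log n\big)^{s/2}$.

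\textbf{Main obstacle.} I expect Step 2 to be the delicate point. Expanding $(1+\Vert\boldsymbol{j}\Vert_2^2)^s$ produces multinomial coefficients as large as $s!$, whereas $w$ contains each monomial $\boldsymbol{j}^{2\boldsymbol{\alpha}}$ only once. A naive comparison therefore loses a factor depending on $s$ and $d$, which would ruin the $d^{-s/2}$ rate.

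My first attempt would be to exploit that $j_i$ are integers. Then $j_i^{2\alpha_i}\ge j_i^2$ whenever $j_i\ne0$ and $\alpha_i\ge1$, and I would restrict to monomials supported on the nonzero coordinates. If that is not tight enough, I would instead follow \cite{kuhn2016} and work with the equivalent norm $\sum_{\boldsymbol{j}}(1+\Vert\boldsymbol{j}\Vert_2^2)^s|c_{\boldsymbol{j}}|^2$, absorbing the norm-equivalence constant into $4^s$.

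The lattice counting in the intermediate regime is the second place where care is needed, since the constants must stay uniform in $d$.
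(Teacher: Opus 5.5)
The paper does not prove this lemma; it imports it from \cite{kuhn2016}. Your architecture (diagonalize, then count lattice points) is reasonable, but Step~2 is a genuine gap that no trick will close. The sharp comparison between the natural weight and the radial one is
\[
\frac{(1+\Vert\boldsymbol{j}\Vert_2^2)^s}{s!}\;\le\;w(\boldsymbol{j})\;\le\;(1+\Vert\boldsymbol{j}\Vert_2^2)^s .
\]
This holds because each $\boldsymbol{\beta}$ with $|\boldsymbol{\beta}|\le s$ contributes $\boldsymbol{j}^{2\boldsymbol{\beta}}$ once to $w$, and with multinomial coefficient $s!/((s-|\boldsymbol{\beta}|)!\,\boldsymbol{\beta}!)\in[1,s!]$ to $(1+\Vert\boldsymbol{j}\Vert_2^2)^s$. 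So the loss is uniform in $d$, and it does not threaten the $d^{-s/2}$ rate, contrary to your worry. But the factor $s!$ is attained: for $\boldsymbol{j}\in\{0,\pm1\}^d$ with $\ell$ nonzero entries, $w(\boldsymbol{j})=\binom{\ell+s}{s}\sim\ell^s/s!$, while $(1+\Vert\boldsymbol{j}\Vert_2^2)^s=(1+\ell)^s$. Integrality cannot help, since every monomial equals $1$ on these vectors. And $\sqrt{s!}$ is not $O(C^s)$, so it cannot be absorbed into $4^s$.

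Worse, these vectors refute the lemma as stated for the paper's derivative norm. Take $n=\binom{d}{\ell}2^\ell-1$. There are at least $n+1$ frequencies with $w\le\binom{\ell+s}{s}$, hence $a_n\ge\binom{\ell+s}{s}^{-1/2}\ge\sqrt{s!}\,(\ell+s)^{-s/2}$. Meanwhile the right-hand side tends to $4^s\ell^{-s/2}$ as $d\to\infty$ with $\ell$ fixed. Letting $\ell\to\infty$ gives a contradiction whenever $s!>16^s$, which holds for all integers $s\ge 41$. The attainment claim fails as well: $w$ is not monotone in $\Vert\boldsymbol{j}\Vert_2$. For $s=3$, $w(1,1,1,1,1,0,\dots)=56<85=w(2,0,\dots)$. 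So the optimal projection is onto the $n$ smallest-$w$ frequencies, and comparability of orderings up to constants can never give exact attainment by the $\Vert\boldsymbol{j}\Vert_2$-ordered set.

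What your plan can prove is the version for a norm whose Fourier weights are a nondecreasing function of $\Vert\boldsymbol{j}\Vert_2$, such as $\sum_{\boldsymbol{j}}(1+\Vert\boldsymbol{j}\Vert_2^2)^s|c_{\boldsymbol{j}}|^2$. This is also the only reading compatible with the non-integer $s>0$ allowed in the statement. There Step~2 is vacuous and Step~1 gives the attainment claim exactly. For the paper's derivative norm you then get the bound with an extra factor $\sqrt{s!}$, which the example above shows is unavoidable.

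Step~3 also needs a repair at its threshold. Since $\mathrm{vol}(B_d(\sqrt d/2))^{1/d}\to\sqrt{2\pi e}/2\approx 2.07>2$, the condition $\mathrm{vol}(B_d(R+\sqrt d/2))\le n$ has no solution $R\ge 0$ at $n=2^d$ once $d$ is large. The volume argument only works for $n\ge C^d$ with $C>\sqrt{2\pi e}/2$. The range $2^d\le n\le C^d$ must then be covered by monotonicity of $a_n$ together with the preasymptotic bound, at the cost of a further factor $(C')^s$. Reaching the specific prefactor $4^s$ requires tracking all of these constants, which your outline does not yet do.
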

Here we neglect the case of $n<d$ in which the approximation to arbitrary accuracy is impossible and $a_n$ is trivial.

\thmtwo*
\begin{proof}
    It's known by definition that $f\in H^s_u([-\pi,\pi]^d)\subset L_2([-\pi,\pi]^d)$ for $s>0$, 
    so by Lemma \ref{lemma:approx} that that exists $k\in \mathbb{N}^+$ and a finite Fourier series $f_k$ s.t.
    \begin{equation*}
        \Vert f_k(\boldsymbol{x}) - f(\boldsymbol{x})\Vert_2 \leq \epsilon.
    \end{equation*}
    Let the sum of the modulus of coefficients of $f_k$ be $a$, then in a similar way of Theorem \ref{thm1} there exists $\boldsymbol{\theta}$, $\boldsymbol{\phi}$ s.t.
    \begin{equation*}
        \bra{0}
        U_{\boldsymbol{\theta,\phi}}(\boldsymbol{x})^\dagger
        O
        U_{\boldsymbol{\theta,\phi}}(\boldsymbol{x})
        \ket{0}
        =
        \bra{0}
        U_{\boldsymbol{\theta,\phi}}(\boldsymbol{x})^\dagger
        \ket{0}
        \bra{0}
        U_{\boldsymbol{\theta,\phi}}(\boldsymbol{x})
        \ket{0}
        =
        f_k^2/a^2.
    \end{equation*}
   
   Then
   \begin{equation*}
       \Vert a\bra{0}U_{\boldsymbol{\theta,\phi}}(\boldsymbol{x})^\dagger O U_{\boldsymbol{\theta,\phi}}(\boldsymbol{x})\ket{0}^{1/2} 
        -
        f(\boldsymbol{x})
        \Vert_{2}
        \leq
        \Vert f_k(\boldsymbol{x}) 
        -
        f(\boldsymbol{x})
        \Vert_{2}
        \leq
        \epsilon
   \end{equation*}

   Next we analyze the circuit cost. By Lemma \ref{lem:error}, we have 
    \begin{equation}
        4^s d^{-s/2} \leq a_n \leq 4^s \left(\frac{\log (1+d/\log d)}{\log d}\right)^{s/2}
    \end{equation}
    when $d \leq n \leq 2^d$, and
    \begin{equation*}
        0 < a_n \leq 2^s d^{-s/2}
    \end{equation*}
    when $n \geq 2^d$.

    Case 1. $4^s \left(\frac{\log (1+d/\log d)}{\log d}\right)^{s/2} \leq \epsilon \leq 1$. Then it holds for all $n\geq d$ that
    \begin{equation*}
        a_n \leq 4^s \left(\frac{\log (1+d/\log d)}{\log d}\right)^{s/2} \leq \epsilon.
    \end{equation*}
    So we only need to set $n=d$ to get accuracy $\epsilon$.

    Case 2. $4^s d^{-s/2} \leq \epsilon \leq 4^s \left(\frac{\log (1+d/\log d)}{\log d}\right)^{s/2}$. To ensure the accuracy $\epsilon$, the upper bound of $a_n$ should satisfies
    \begin{align*}
        4^s\left(\frac{\log (1+d/\log n)}{\log n}\right)^{s/2} \leq \epsilon.
    \end{align*}
    For the convenience of solving $n$, we relax the upper bound of $a_n$,
    \begin{align*}
        \frac{\log (1+d/\log n)}{\log n} \leq \frac{\log (1+d/\log d)}{\log n} \leq \frac{\log (1+d)}{\log n},
    \end{align*}
    where $d \geq 2$. Then solve $n$ from the inequality
    \begin{align*}
        4^s\left(\frac{\log (1+d)}{\log n}\right)^{s/2} \leq \epsilon,
    \end{align*}
    we have
    \begin{align*}
        \log n &\geq 16 (1/\epsilon)^{2/s} \log(1+d), 
    \end{align*}
    which implies
    \begin{equation*}
        n \geq (d+1)^{16(1/\epsilon)^{2/s}}.
    \end{equation*}
    
    Case 3. $2^s d^{-s/2} \leq \epsilon \leq 4^s d^{-s/2}$. Recall that $a_n \leq 2^s d^{-s/2}$ for $n \geq 2^d$, we only need to set $n=2^d$.

    Case 4. $0 \leq \epsilon \leq 2^s d^{-s/2}$. In this case, we solve the inequality
    \begin{equation*}
        4^s d^{-s/2} n^{-s/d} \leq \epsilon,
    \end{equation*}
    then we get
    \begin{equation*}
        n \geq 4^d(1/\epsilon)^{d/s}d^{-d/2}.
    \end{equation*}
    So 
    \begin{align*}
        \log n \geq 2d +\frac{d}{s}\log (1/\epsilon) -\frac{d}{2}\log d.
    \end{align*}
    Relaxing $\log d$ to 1, we have
    \begin{equation*}
        \log n \geq \frac{3}{2}d +\frac{d}{s}\log (1/\epsilon).
    \end{equation*}
    Since $\epsilon \leq 2^s d^{-s/2}$, we have an upper bound $d \leq 4(1/\epsilon)^{2/s}$. Then
    \begin{equation*}
        \log n \geq 6(1/\epsilon)^{2/s} + \frac{4}{s}(1/\epsilon)^{2/s}\log (1/\epsilon),
    \end{equation*}
    which implies
    \begin{equation*}
        n \geq 2^{6(1/\epsilon)^{2/s}} (1/\epsilon)^{\frac{4}{s}(1/\epsilon)^{2/s}}.
    \end{equation*}

    To rule them all in four cases, we give an inequality that holds for any $0\leq \epsilon \leq 1$ and $d\geq 2$:
    \begin{equation*}
        n \geq (d+1/\epsilon)^{16(1/\epsilon)^{2/s}}.
    \end{equation*}
    Due to the optimality of Fourier series approximating Sobolev spaces under $L_2$ norm \cite{pinkus2012}, the minimal $n$ to achieve accuracy $\epsilon$ is $(d+1/\epsilon)^{16(1/\epsilon)^{2/s}}$.

    Back to our model, to realize a Fourier series with $n$ terms, the state preparation block should prepare $O(n)$ different real amplitudes, and the phase injection block should pair each amplitude with a phase, which also produces $O(n)$ parameters. So the parameter complexity of our model is totally $O(n) = O((d+1/\epsilon)^{16(1/\epsilon)^{2/s}})$.  

    Further, we need $O(\log n)$ qubits of state preparation block to generate $O(n)$ amplitudes, and extra $O(d)$ qubits to apply $U_r$. However, we can use only one $R_z$ gate to implement each $U_r$,
    \begin{equation*}
        U_r(\boldsymbol{x}) = R_z(-2\boldsymbol{j}^{(r)}\boldsymbol{x}),
    \end{equation*}
    which still satisfies
    \begin{equation*}
        \bra{0}U_r (\boldsymbol{x})\ket{0} = e^{i\boldsymbol{j}^{(r)}\boldsymbol{x}}.
    \end{equation*}
    The inputs $\boldsymbol{x}$ are more densely encoded for reducing qubit cost of SAQNN.
    
    This optimization technique uses a different pre-process to compresses inputs into one rotation gate while not affecting the correctness of Theorem \ref{thm1} and Theorem \ref{thm2}. So the number of qubits needed here can be reduced to 1. The width (number of qubits) of our model is $O(\log n) = (1/\epsilon)^{2/s}\log(d+1/\epsilon)$.

    To analyze the circuit depth, we should take use of Lemma \ref{lemma1}, which implies a $O(2^{k-1})$ depth synthesis for $k$-qubit multiplexor-$R_y$ gate. So the circuit depth of state preparation block is $\sum_{k=1}^{m}2^{k-1}=O(2^m)=O(n)$. In spectrum selection block, we have $O(n)$ layers. In each layer, we can synthesize the controlled-$R_z$ gate as below: 
    \begin{center}
    \scalebox{1.2}{
    \begin{tikzpicture}
    \begin{yquantgroup}
    \registers{
    qubit {} q[2];
    
    }
    \circuit{
    slash q[0];
    hspace {2mm} q[1];
    box {$R_z(-2\boldsymbol{j}^{(r)}\boldsymbol{x})$} q[1]| q[0];
    hspace {2mm} q[1];
    }
    \equals
    \circuit{
    slash q[0];
    hspace {2mm} q[1];
    box {$R_z(-\boldsymbol{j}^{(r)}\boldsymbol{x})$} q[1];
    cnot q[1]|q[0];
    box {$R_z(\boldsymbol{j}^{(r)}\boldsymbol{x})$} q[1];
    cnot q[1]|q[0];
    hspace {2mm} q[1];
    }
    \end{yquantgroup}
    \node [scale=0.75] at (0.08,0.05) {$m$};
    \node [scale=0.75] at (3.25,0.05) {$m$};

    \end{tikzpicture}
    }
    \end{center}
    It's known that each $(m+1)$-qubit Toffoli gate can be implemented by CNOT gate in depth $O(m)$ \cite{Saeedi2013}. So the circuit depth of spectrum selection block is $O(mn)=O(n\log n)$. Similarly the depth of inversion of state preparation block is also $O(n)$. As a whole, the circuit depth of our model is $O(n\log n) = O((d+1/\epsilon)^{16(1/\epsilon)^{2/s}}(1/\epsilon)^{2/s}\log(d+1/\epsilon))$.

\end{proof}

\section{Parameter optimality}\label{apd5}

In approximation theory, the approximation methods mainly conclude linear methods \cite{pinkus2012} and nonlinear methods \cite{devore1998}. Linear methods refer to the linear combination of simple function basis, such as Fourier series, Chebyshev series, Bernstein polynomials and so on. They span linear subspaces to approximate target functions and takes the form
\begin{equation*}
    A(f) = a_0 f_0+a_1f_1+...+a_l f_l.
\end{equation*}

Nonlinear approximation methods remove the constraint of linear dependence on parameters. The approximating function is a nonlinear function of the parameters, or the selection of basis functions itself depends on the function being approximated. The approximation $A(f)$ does not have a simple linear structure. A well-known example for nonlinear methods is FNNs with single hidden layer, whose expression can be written as
\begin{equation*}
    A(f) = \sigma(\sum_{i}w_i x_i),
\end{equation*}
where $\sigma$ is the nonlinear activation function. It has been proved to have UAP in early years \cite{cybenko1989,hornik1989}.

When considering the number of parameters needed to approximate specific functions, the commonly used indicators are Kolmogorov $n$-width for linear methods and manifold $n$-width for nonlinear methods. 
\begin{definition}[Kolmogorov $n$-width \cite{kolmogorov1936,pinkus2012}]
    Let $K$ be a subset of a normed linear space $X$. The Kolmogorov $n$-width $d_n(K,X)$ measures the error of the best approximation to $K$ by $n$-dimensional linear subspaces of $X$. It is defined as:
    \begin{equation*}
        d_n(K,X) := \inf_{X_n}\sup_{f\in K}\inf_{g\in X_n} \Vert f-g \Vert_X,
    \end{equation*}
    where the outer infimum is taken over all linear subspaces $X_n \subset X$ of dimension at most $n$. 
    
    This quantity characterizes the fundamental limit of linear approximation methods. Note that Kolmogorov $n$-width $d_n(H^s([-\pi,\pi]^d),L^2([-\pi,\pi]^d))$ is same to the $n$-th approximation number $a_n(H^s([-\pi,\pi]^d)\rightarrow L^2([-\pi,\pi]^d))$ in our case.
\end{definition}
To characterize the approximation power of nonlinear parameterized models, we consider the Manifold $n$-width.
\begin{definition}[Manifold $n$-width \cite{devore1989,maiorov1999}]
     Let $M_n$ denote a continuous manifold in $X$ parameterized by $n$ real variables, known as $M_n = \{R(a):a\in \mathbb{R}^n\}$ for a continuous mapping $R:\mathbb{R}^n\rightarrow X$. The Manifold $n$-width $\delta_n(K,X)$ is defined as 
    \begin{equation*}
        \delta_n(K,X) := \inf_{M_n}\sup_{f\in K}\inf_{g\in M_n} \Vert f-g \Vert_X.
    \end{equation*}
\end{definition}
It has been shown in approximation theory that in terms of the scaling of parameter count $n$, it satisfies \cite{devore1989,edmunds1996,pinkus2012,geller2014}
\begin{equation*}
    d_n(H^s([-\pi,\pi]^d),L^2([-\pi,\pi]^d)) \asymp
    \delta_n(H^s([-\pi,\pi]^d),L^2([-\pi,\pi]^d)) \asymp
    n^{-s/d},
\end{equation*}
here $a\asymp b$ means that there exists constants $c_1,c_2$ s.t. $c_1 b \leq a \leq c_2 b$.

The parameter efficiency of linear methods is same as that of nonlinear methods. So the $O(n)$ parameter complexity of SAQNN is optimal among linear and nonlinear methods, with all quantum and classical neural networks not being better than our model in asymptotic order of parameters when approximating Sobolev functions to accuracy $\epsilon$ under $L_2$ norm.

\end{document}